\DeclareMathOperator{\ADF}{ADF}
\DeclareMathOperator{\AMF}{AMF}
\DeclareMathOperator{\CDF}{CDF}
\DeclareMathOperator{\CMF}{CMF}
\DeclareMathOperator{\PSC}{PSC}
\DeclareMathOperator{\len}{len}
\DeclareMathOperator{\RePart}{Re}
\renewcommand{\phi}{\varphi}
\renewcommand{\Re}{\RePart}
\newcommand{\C}{{\mathbb C}}
\newcommand{\N}{{\mathbb N}}
\newcommand{\R}{{\mathbb R}}
\newcommand{\Z}{{\mathbb Z}}
\newcommand{\conj}[1]{\overline{#1}}
\newcommand{\norm}[2]{\Vert{#1}\Vert_{#2}}
\newcommand{\bnorm}[2]{\Big\Vert{#1}\Big\Vert_{#2}}
\newcommand{\normp}[3]{\norm{#1}{#2}^{#3}}
\newcommand{\bnormp}[3]{\bnorm{#1}{#2}^{#3}}
\newcommand{\normot}[1]{\normp{#1}{1}{2}}
\newcommand{\normtt}[1]{\normp{#1}{2}{2}}
\newcommand{\bnormtt}[1]{\bnormp{#1}{2}{2}}
\newcommand{\normtf}[1]{\normp{#1}{2}{4}}
\newcommand{\normff}[1]{\normp{#1}{4}{4}}
\newcommand{\anrd}[1]{\frac{\normff{#1}}{\normtf{#1}}-1}
\newcommand{\cnrd}[2]{\frac{\normtt{{#1} {#2}}}{\normtt{#1}  \normtt{#2}}}
\newcommand{\ggen}[1]{\langle{#1}\rangle}
\newcommand{\ta}{\widetilde{a}}
\newcommand{\tb}{\widetilde{b}}
\newcommand{\tf}{\widetilde{f}}
\newcommand{\tg}{\widetilde{g}}
\newcommand{\modeg}{(-1)^{\ell-1}}
\newcommand{\zs}{z^{2\ell-1}}
\newcommand{\dzs}{z^{4\ell-2}}
\newcommand{\izs}{z^{1-2\ell}}
\newcommand{\idzs}{z^{2-4\ell}}
\newcommand{\apols}{P_\ell}
\newcommand{\agr}{G_\ell}
\newcommand{\cpols}{\apols\times \apols}
\newcommand{\cgr}{G_{\ell,\ell}}
\newcommand{\stt}[1]{{\small{\tt {#1}}}}
\newcommand{\ttt}[1]{{\tiny{\tt {#1}}}}
\newtheorem{theorem}{Theorem}[section]
\newtheorem{proposition}[theorem]{Proposition}
\newtheorem{lemma}[theorem]{Lemma}
\newtheorem{corollary}[theorem]{Corollary}
\theoremstyle{remark}
\newtheorem{remark}[theorem]{Remark}
\title{Crosscorrelation of Rudin-Shapiro-Like Polynomials}
\author{Daniel J.~Katz}
\address{Department of Mathematics, California State University, Northridge, \: United States}
\author{Sangman Lee}
\author{Stanislav A.~Trunov}
\thanks{This paper is based on work of the three authors supported in part by the National Science Foundation under Grant DMS 1500856.}
\date{13 July 2018}
\begin{document}
\begin{abstract}
We consider the class of Rudin-Shapiro-like polynomials, whose $L^4$ norms on the complex unit circle were studied by Borwein and Mossinghoff.  The polynomial $f(z)=f_0+f_1 z + \cdots + f_d z^d$ is identified with the sequence $(f_0,f_1,\ldots,f_d)$ of its coefficients.   From the $L^4$ norm of a polynomial, one can easily calculate the autocorrelation merit factor of its associated sequence, and conversely.  In this paper, we study the crosscorrelation properties of pairs of sequences associated to Rudin-Shapiro-like polynomials.  We find an explicit formula for the crosscorrelation merit factor.  A computer search is then used to find pairs of Rudin-Shapiro-like polynomials whose autocorrelation and crosscorrelation merit factors are simultaneously high.  Pursley and Sarwate proved a bound that limits how good this combined autocorrelation and crosscorrelation performance can be.  We find infinite families of polynomials whose performance approaches quite close to this fundamental limit.
\end{abstract}
\maketitle

\section{Introduction}
This paper concerns the discovery of Rudin-Shapiro-like polynomials that have exceptionally good correlation properties.  Shapiro \cite{Shapiro} recursively constructed a family of polynomials with coefficients in $\{-1,1\}$ that are flat on the complex unit circle: the ratio of their $L^\infty$ to $L^2$ norm never exceeds $\sqrt{2}$. 
Around the same time, Golay \cite{Golay-1951} constructed binary sequences following the same recursion as the coefficients of Shapiro's polynomials.  Shapiro's polynomials were subsequently rediscovered by Rudin \cite{Rudin}.
Littlewood used the $L^4$ norm on the complex unit circle in his investigation \cite{Littlewood-1966} into the flatness of polynomials with coefficients in $\{-1,1\}$, which are now known as {\it Littlewood polynomials}.
He calculated the ratio of $L^4$ to $L^2$ norm of the Rudin-Shapiro polynomials in \cite[Problem 19]{Littlewood-1968}.

Golay \cite{Golay-1972} independently developed the {\it merit factor}, a normalized average of the mean squared magnitude of the aperiodic autocorrelation of sequences used in remote sensing and communications networks.
Eventually it was discovered that determining the $L^4$ norm on the unit circle of a polynomial is tantamount to determining Golay's merit factor for the sequence of coefficients of that polynomial (see \cite[eq.~(4.1)]{Hoholdt-Jensen}).
We shall soon make precise this connection between the analytic behavior of polynomials on the complex unit circle and the correlation behavior of their associated sequences.

Inspired by the work of Littlewood on the Rudin-Shapiro polynomials, Borwein and Mossinghoff \cite{Borwein-Mossinghoff} recursively define sequences $f_0(z), f_1(z), \ldots$ of polynomials, where $f_0(z)$ is any Littlewood polynomial and the rest of the polynomials are obtained via a recursion of the form
\begin{equation}\label{Francis}
f_{n+1}(z)=f_n(z) + \sigma_n z^{1+\deg f_n} f_n^\dag(-z),
\end{equation}
where $\sigma_n \in \{-1,1\}$ represents an arbitrary sign that can differ at each stage of the recursion, and where for any polynomial $a(z)=a_0 + a_1 z + \cdots + a_d z^d \in \C[z]$ of degree $d$, the polynomial $a^\dag(z)$ denotes the {\it conjugate reciprocal polynomial of $a(z)$}, which is $\conj{a_d} + \conj{a_{d-1}} z + \cdots + \conj{a_0} z^d$.\footnote{In fact, Borwein-Mossinghoff use the {\it reciprocal polynomial} $a^*(z)=a_d + a_{d-1} z + \cdots + a_0 z^d$, but since they are working with polynomials with real coefficients, this is the same as $a^\dag(z)$.  In this paper, it was found that using $a^\dag(z)$ instead of $a^*(z)$ gives the natural generalization of their recursion for polynomials with non-real coefficients.} If one sets $f_0(z)=1$, $\sigma_0=1$, and $\sigma_n=(-1)^{n+1}$ for all positive $n$, then $f_0, f_1, f_2, \ldots$ become Shapiro's original polynomials \cite[Theorem 5(ii)]{Shapiro}.

In this paper, we relax the condition that the initial polynomial $f_0(z)$ be a Littlewood polynomial.  We allow $f_0(z)$ to be a polynomial in $\C[z]$, and only impose the condition that $f_0(z)$ have a nonzero constant coefficient.  This ensures that $f_0^\dag(z)$ has the same degree as $f_0(z)$, so that when we construct $f_0(z),f_1(z),\ldots$ via recursion \eqref{Francis}, a straightforward induction shows that every $f_n$ has a nonzero constant coefficient and
\begin{equation}\label{George}
1+\deg f_n=2^n (1+\deg f_0)
\end{equation}
for every $n$.
We call the sequence $\sigma=\sigma_0,\sigma_1,\ldots$ of numbers in $\{-1,1\}$ that occur in our recursion the {\it sign sequence} for that recursion.
We call $f_0(z)$ the {\it seed}, and the sequence $f_0(z), f_1(z), \ldots$ of polynomials obtained from the seed by applying the recursion is called the {\it stem} associated to seed $f_0(z)$ and sign sequence $\sigma$.
Any stem obtained from a seed $f_0(z) \in \C[z]$ with nonzero constant coefficient is also called a {\it sequence of Rudin-Shapiro-like polynomials}.

Borwein and Mossinghoff \cite{Borwein-Mossinghoff} study the $L^4$ norm of Rudin-Shapiro-like Littlewood polynomials on the complex unit circle, or equivalently, the autocorrelation merit factor of these polynomials.
We now describe the correspondence between polynomials and sequences, and the relation of $L^p$ norms on the complex unit circle to correlation.

In this paper, by a {\it sequence of length $\ell$} we mean some $(a_0,a_1,\ldots,a_{\ell-1}) \in \C^\ell$.
Most researchers are especially interested in sequences that are {\it unimodular}, that is, whose terms are all of unit magnitude, and are most of all interested in sequences that are {\it binary}, that is, whose terms lie in $\{-1,1\}$.
We always identify the polynomial $a(z)=a_0+a_1 z+ \cdots + a_{\ell-1} z^{\ell-1} \in \C[z]$ of degree $\ell-1$ with the sequence $a=(a_0,a_1,\ldots,a_{\ell-1})$ of length $\ell$.
With this identification, binary sequences correspond to Littlewood polynomials.
Usually it is easier to work with sequence length rather than polynomial degree, and using our identification of sequences with polynomials, we define the {\it length} of a polynomial to be the length of the sequence associated with the polynomial, that is, $\len a=1+\deg a$.
Notice that when we have a stem $f_0(z),f_1(z),\ldots$ generated from a seed $f_0(z) \in \C[z]$ with nonzero constant coefficient via recursion \eqref{Francis}, using length rather than degree simplifies relation \eqref{George} to
\begin{equation}\label{Henry}
\len f_n=2^n \len f_0.
\end{equation}

If $f=(f_0,f_1,\ldots,f_{\ell-1})$ and $g=(g_0,g_1,\ldots,g_{\ell-1})$ are two sequences of length $\ell$ and $s \in \Z$, then we define the {\it aperiodic crosscorrelation of $f$ and $g$ at shift $s$} to be 
\[
C_{f,g}(s) = \sum_{j \in \Z} f_{j+s} \conj{g_j},
\]
where we take $f_j=g_j=0$ whenever $j \not\in\{0,1,\ldots,\ell-1\}$.

Autocorrelation is crosscorrelation of a sequence with itself, so the {\it aperiodic autocorrelation of $f$ at shift $s$} is just $C_{f,f}(s)$.
If the terms of $f$ are complex numbers of unit magnitude, then $C_{f,f}(0)=\len f$.

Autocorrelation and crosscorrelation are studied extensively because of their importance in communications networks: see \cite{Scholtz-Welch,Sarwate-Pursley,Golomb-Gong,Karkkainen,Jedwab,Hoholdt,Schroeder} for some overviews.  It is desirable to have sequences whose autocorrelation values at all nonzero shifts are small in magnitude, and it is desirable to have pairs of sequences whose crosscorrelation values at all shifts are small in magnitude.

For a polynomial $a(z)=a_0+ a_1 z+ \cdots +a_{\ell-1} z^{\ell-1}\in \C[z]$, we define $\conj{a(z)}$ to be the Laurent polynomial $\conj{a_0} +\conj{a_1} z^{-1} + \cdots + \conj{a_{\ell-1}} z^{-(\ell-1)}$.  (We identify $\conj{z}$ with $z^{-1}$ because we are concerned with the properties of our polynomials on the complex unit circle.) 
If $f(z)$ and $g(z)$ are polynomials in $\C[z]$, then it is not hard to show that the values of the crosscorrelation between their associated sequences at all shifts are recorded in the following product of Laurent polynomials:
\[
f(z) \conj{g(z)} = \sum_{s \in \Z} C_{f,g}(s) z^s.
\]
The {\it crosscorrelation demerit factor} of $f$ and $g$ is defined to be
\[
\CDF(f,g) = \frac{\sum_{s \in \Z} |C_{f,g}(s)|^2}{|C_{f,f}(0)| \cdot |C_{g,g}(0)|}.
\]
Its reciprocal, the {\it crosscorrelation merit factor}, is defined as $\CMF(f,g) = 1/\CDF(f,g)$.  A low demerit factor (or equivalently, high merit factor) indicates a sequence pair whose crosscorrelation values are collectively low, hence desirable.
The {\it autocorrelation demerit factor of $f$} is much like the crosscorrelation demerit factor, but omits $|C_{f,f}(0)|^2$ in the numerator:
\begin{equation}\label{Linda}
\ADF(f) = \frac{\sum_{s \in \Z, s \not=0} |C_{f,f}(s)|^2}{|C_{f,f}(0)|^2} = \CDF(f,f)-1,
\end{equation}
and the {\it autocorrelation merit factor} is its reciprocal $\AMF(f) = 1/\ADF(f)$.  The autocorrelation merit factor defined here is Golay's original merit factor, introduced in \cite{Golay-1972}.

If $f(z) \in \C[z,z^{-1}]$ is a Laurent polynomial, and $p$ is a real number with $p\geq 1$, then we define the {\it $L^p$ norm of $f(z)$ on the complex unit circle} to be
\[
\norm{f}{p} = \left(\frac{1}{2\pi} \int_0^{2\pi} |f(e^{i \theta})|^p d\theta \right)^{1/p}.
\]
One can show (see \cite[Section V]{Katz}) that
\begin{equation}\label{David}
\CDF(f,g)=\cnrd{f}{g},
\end{equation}
and
\[
\ADF(f)=\CDF(f,f)-1 = \anrd{f}.
\]

Borwein and Mossinghoff \cite[Theorem 1 and Corollary 1]{Borwein-Mossinghoff} explicitly calculate the autocorrelation demerit factors for Rudin-Shapiro-like Littlewood polynomials and determine their asymptotic behavior.
\begin{theorem}[Borwein-Mossinghoff, 2000]\label{Robert}
If $f_0,f_1,f_2,\ldots$ is a sequence of Rudin-Shapiro-like polynomials generated from any Littlewood polynomial $f_0(z)$ via recursion \eqref{Francis}, then
\[
\lim_{n\to\infty} \ADF(f_n) =-1+ \frac{2}{3} \cdot \frac{\normff{f_0}+\normtt{f_0 \tf_0}}{\normtf{f_0}} \geq \frac{1}{3},
\]
where $\tf_0(z)$ is the polynomial $f_0(-z)$.
\end{theorem}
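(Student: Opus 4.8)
The plan is to reduce the computation of $\lim_n\ADF(f_n)$ to solving a linear recurrence. Recall that $\ADF(f_n)=\anrd{f_n}$. Since the two summands on the right of \eqref{Francis} occupy disjoint ranges of monomials, Littlewoodness propagates along the stem, so each $f_n$ is a Littlewood polynomial; by \eqref{Henry} it has length $2^n\len f_0$, and hence $\normtf{f_n}=(\len f_n)^2=4^n\normtf{f_0}$. So everything comes down to finding $\normff{f_n}$. The one subtlety is that $\normff{f_n}$ does not by itself obey a first-order recurrence; I would carry along the auxiliary quantity $\normtt{f_n\tf_n}$, writing $\tf_n(z)=f_n(-z)$ to generalize the notation $\tf_0$ of the statement.

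First I would record the effect of \eqref{Francis} on the Laurent polynomial $A_n(z)=f_n(z)\conj{f_n(z)}=\sum_s C_{f_n,f_n}(s)z^s$, whose value on the unit circle is $|f_n(z)|^2$. By Parseval, $\normff{f_n}=\sum_s|C_{f_n,f_n}(s)|^2$, and expanding the constant term of the Laurent polynomial $A_n(z)A_n(-z)$ — which on the circle is $|f_n(z)f_n(-z)|^2$, so its average is $\normtt{f_n\tf_n}$ — gives $\normtt{f_n\tf_n}=\sum_s(-1)^s|C_{f_n,f_n}(s)|^2$. Using the identity $f_n^\dag(z)=z^{\deg f_n}\conj{f_n(z)}$ one gets $z^{1+\deg f_n}f_n^\dag(-z)=(-1)^{\deg f_n}z^{1+2\deg f_n}\conj{f_n(-z)}$; substituting this into \eqref{Francis} and multiplying out $f_{n+1}(z)\conj{f_{n+1}(z)}$ yields
\[
A_{n+1}(z)=A_n(z)+A_n(-z)+\sigma_n(-1)^{\deg f_n}\left(z^{-(1+2\deg f_n)}F_n(z)+z^{1+2\deg f_n}\conj{F_n(z)}\right),
\]
where $F_n(z)=f_n(z)f_n(-z)$. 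The observation that makes the rest mechanical is that $F_n$ is fixed by $z\mapsto-z$, hence a polynomial in $z^2$: so $A_n(z)+A_n(-z)$ is supported on even powers of $z$, the bracketed term on odd powers, and inside the bracket the two summands are conjugate-reciprocal to one another with disjoint support. Plugging this into $\normff{f_{n+1}}=\sum_s|C_{f_{n+1},f_{n+1}}(s)|^2$ and into the constant-term formula for $\normtt{f_{n+1}\tf_{n+1}}$, every cross term vanishes — by orthogonality of even and odd powers, or because $z^{-2(1+2\deg f_n)}F_n(z)^2$ has no constant term — leaving
\[
\normff{f_{n+1}}=2\normff{f_n}+4\normtt{f_n\tf_n},\qquad \normtt{f_{n+1}\tf_{n+1}}=2\normff{f_n}.
\]

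Eliminating the auxiliary quantity, I would obtain the second-order recurrence $\normff{f_{n+1}}=2\normff{f_n}+8\normff{f_{n-1}}$ for $n\ge1$, with characteristic polynomial $x^2-2x-8=(x-4)(x+2)$, so $\normff{f_n}=\alpha\cdot4^n+\beta\cdot(-2)^n$ for constants $\alpha,\beta$ fixed by $\normff{f_0}$ and $\normff{f_1}=2\normff{f_0}+4\normtt{f_0\tf_0}$; a short computation gives $\alpha=\tfrac23(\normff{f_0}+\normtt{f_0\tf_0})$. Then $\ADF(f_n)=\normff{f_n}/\normtf{f_n}-1$ tends, as $n\to\infty$ (the $(-2)^n$ term being negligible against $4^n$), to $\alpha/\normtf{f_0}-1=-1+\tfrac23(\normff{f_0}+\normtt{f_0\tf_0})/\normtf{f_0}$, as claimed.

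For the inequality I would use the same expansions: $\normff{f_0}+\normtt{f_0\tf_0}=\sum_s\bigl(1+(-1)^s\bigr)|C_{f_0,f_0}(s)|^2=2\sum_{s\ \mathrm{even}}|C_{f_0,f_0}(s)|^2\ge2|C_{f_0,f_0}(0)|^2$, and since $f_0$ is Littlewood, $C_{f_0,f_0}(0)=\len f_0$, so this is $2(\len f_0)^2=2\normtf{f_0}$; hence the limit is at least $-1+\tfrac23\cdot2=\tfrac13$. The hard part will be the derivation of the displayed identity for $A_{n+1}(z)$ and the verification of the even/odd support claims and the vanishing constant term; once those are settled, what remains is a standard second-order linear recurrence and the one-line Parseval bound above. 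It is also worth taking care to confirm that $\normtt{f_n\tf_n}$ — and not some other bilinear invariant of $f_n$ — is precisely the quantity that closes the system.
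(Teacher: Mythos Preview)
Your argument is correct, and it takes a genuinely different route from the paper's. The paper does not prove this theorem directly; instead it establishes the more general crosscorrelation formula (Theorem~\ref{Vivian}) by tracking the three quantities $u_n=\normtt{f_n g_n}$, $v_n=\normtt{f_n\tg_n}$, $w_n=\Re\int f_n\tf_n\conj{g_n\tg_n}$ through a $3\times 3$ matrix recursion (Lemma~\ref{Albert}, which in turn rests on the integral identity of Lemma~\ref{Eustace}), diagonalizes that matrix (Theorem~\ref{Sally}), and then specializes to $g_0=f_0$ (Corollary~\ref{Barbara}, Corollary~\ref{Samuel}). Your approach goes straight to the autocorrelation case: since for $g=f$ one has $v_n=w_n$, the system collapses to the two quantities $\normff{f_n}$ and $\normtt{f_n\tf_n}$, and you derive the $2\times 2$ recursion by a clean Parseval/coefficient argument on $A_n(z)=|f_n(z)|^2$, exploiting the even/odd split of $A_{n+1}$. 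This is more elementary---no diagonalization, no technical integral lemma---and your proof of the bound $\normff{f_0}+\normtt{f_0\tf_0}\geq 2\normtf{f_0}$ via $2\sum_{s\text{ even}}|C_{f_0,f_0}(s)|^2\geq 2|C_{f_0,f_0}(0)|^2$ is likewise simpler than the paper's Jensen-inequality argument ($L^2\geq L^1$ on the circle). What the paper's approach buys is generality: the $3\times 3$ system handles arbitrary seed pairs $(f_0,g_0)\in\C[z]^2$ and yields the crosscorrelation limit, which is the paper's actual goal; your coefficient approach, as written, leans on the reality of the autocorrelations $C_{f_n,f_n}(s)$ (implicitly using $c_{-s}=\conj{c_s}$ when identifying the constant term of $A_n(z)A_n(-z)$ with $\sum_s(-1)^s|c_s|^2$) and would need modification to reach the crosscorrelation result.
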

Borwein and Mossinghoff \cite[Section 3]{Borwein-Mossinghoff} go on to find examples where the limiting autocorrelation demerit factor reaches the lower bound of $1/3$, so that well-chosen families of Rudin-Shapiro-like polynomials can reach asymptotic autocorrelation merit factors as high as $3$.

We are interested in both autocorrelation and crosscorrelation merit factors.
It turns out that there are limits to how good one can simultaneously make autocorrelation and crosscorrelation performance.
Pursley and Sarwate \cite[eqs.~(3),(4)]{Pursley-Sarwate} proved a bound that relates autocorrelation and crosscorrelation demerit factors for binary sequences:
\begin{equation}\label{Nancy}
|\CDF(f,g)-1| \leq \sqrt{\ADF(f) \ADF(g)}.
\end{equation}
We define the {\it Pursley-Sarwate Criterion} of $f$ and $g$ to be
\begin{equation}\label{Paul}
\PSC(f,g)=\sqrt{\ADF(f)\ADF(g)} + \CDF(f,g),
\end{equation}
so that \eqref{Nancy} implies that
\begin{equation}\label{Hyeon}
\PSC(f,g) \geq 1.
\end{equation}
Since we want sequence pairs with low mutual crosscorrelation and where both sequences individually have low autocorrelation, we would like to find $f$ and $g$ with $\PSC(f,g)$ as close to $1$ as possible.

In pursuit of this goal, we found a formula for the asymptotic crosscorrelation demerit factor of a family of pairs of Rudin-Shapiro-like sequences.  This formula specializes to give information about autocorrelation that generalizes the results of Theorem \ref{Robert} to embrace polynomials with coefficients other than $-1$ and $1$.
\begin{theorem}\label{Vivian}
Let $f_0, g_0 \in \C[z]$ be polynomials of equal length having nonzero constant coefficients.
If $f_0,f_1,\ldots$ and $g_0,g_1,\ldots$ are sequences of Rudin-Shapiro-like polynomials generated from $f_0$ and $g_0$ via recursion \eqref{Francis}, then
\begin{align*}
\lim_{n \to \infty} \ADF(f_n) & = -1 + \frac{2}{3} \cdot \frac{\normff{f_0}+ \normtt{f_0 \tf_0}}{\normtf{f_0}} \geq \frac{1}{3}, \\
\lim_{n \to \infty} \ADF(g_n) & = -1 + \frac{2}{3} \cdot \frac{\normff{g_0}+ \normtt{g_0 \tg_0}}{\normtf{g_0}} \geq \frac{1}{3}, \\
\lim_{n \to \infty} \CDF(f_n,g_n) & = \frac{2 \normtt{f_0 g_0}+\normtt{f_0 \tg_0} + \Re  \int f_0 \tf_0 \conj{g_0 \tg_0}}{3 \normtt{f_0} \normtt{g_0}},
\end{align*}
where $\tf_0(z)$ and $\tg_0(z)$ are respectively the polynomials $f_0(-z)$ and $g_0(-z)$, and
\[
\int f_0 \tf_0 \conj{g_0 \tg_0} =
\frac{1}{2\pi} \int_{0}^{2\pi} f_0(e^{i \theta}) \tf_0(e^{i \theta}) \conj{g_0(e^{i\theta})  \tg_0(e^{i\theta})} d\theta.
\]
\end{theorem}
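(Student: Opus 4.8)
The plan is to transfer everything to the complex unit circle — write $d\mu=d\theta/2\pi$ for the normalized measure there — and follow a short list of $L^{2}$-quantities through the recursion. First I would rewrite \eqref{Francis}. Put $\widetilde a(z)=a(-z)$ for $a\in\C[z]$; a polynomial $a$ of length $\ell$ has $a^{\dag}(z)=z^{\ell-1}\conj{a(z)}$, hence $a^{\dag}(-z)=(-1)^{\ell-1}z^{\ell-1}\conj{\widetilde a(z)}$ and $\conj{\widetilde a}(-z)=\conj{a(z)}$. With $\sigma_{0},\sigma_{1},\dots$ the sign sequence of \eqref{Francis} — the same for both stems, as the sign-free form of the conclusion requires — with $L_{n}=\len f_{n}=\len g_{n}=2^{n}\len f_{0}$ by \eqref{Henry}, and with $\epsilon_{n}=\sigma_{n}(-1)^{L_{n}-1}$, the recursion becomes $f_{n+1}=f_{n}+\epsilon_{n}z^{2L_{n}-1}\conj{\tf_{n}}$, and a one-line computation gives $\tf_{n+1}=\tf_{n}-\epsilon_{n}z^{2L_{n}-1}\conj{f_{n}}$, with the same $\epsilon_{n}$ for $g$ since $\len f_{n}=\len g_{n}$. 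Because the two summands of \eqref{Francis} have disjoint coefficient supports, $\normt{f_{n+1}}^{2}=2\normt{f_{n}}^{2}$, so $\normt{f_{n}}^{2}=2^{n}\normt{f_{0}}^{2}$ (and likewise for $g$); together with \eqref{David} this reduces the theorem to finding the $4^{n}$-rate of growth of $\normff{f_{n}}$, $\normff{g_{n}}$, $\normtt{f_{n}\tf_{n}}$, $\normtt{g_{n}\tg_{n}}$, $\normtt{f_{n}g_{n}}$, $\normtt{f_{n}\tg_{n}}$, and $\Re\int f_{n}\tf_{n}\conj{g_{n}\tg_{n}}$.

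The engine is the identity $|f_{n+1}(e^{i\theta})|^{2}=|f_{n}(e^{i\theta})|^{2}+|\tf_{n}(e^{i\theta})|^{2}+R^{f}_{n}(\theta)$ with $R^{f}_{n}(\theta)=2\epsilon_{n}\Re\!\bigl(e^{-i(2L_{n}-1)\theta}f_{n}(e^{i\theta})\tf_{n}(e^{i\theta})\bigr)$, together with its $g$-analogue: the first two terms are invariant under $\theta\mapsto\theta+\pi$, whereas $R^{f}_{n}$ and $R^{g}_{n}$ change sign under $\theta\mapsto\theta+\pi$ (the exponent $2L_{n}-1$ is odd and $f_{n}(z)\tf_{n}(z)$ is unchanged under $z\mapsto-z$). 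For the autocorrelation I would expand $|f_{n+1}|^{4}$ and $|f_{n+1}|^{2}|\tf_{n+1}|^{2}$ and integrate: products of a $\pi$-periodic factor with a $\pi$-antiperiodic one integrate to $0$, and the surviving $\int(R^{f}_{n})^{2}\,d\mu$ equals $2\normtt{f_{n}\tf_{n}}$ once one observes that $z^{-(4L_{n}-2)}f_{n}(z)^{2}f_{n}(-z)^{2}$ carries no constant term. This yields the exact recursions $\normff{f_{n+1}}=2\normff{f_{n}}+4\normtt{f_{n}\tf_{n}}$ and $\normtt{f_{n+1}\tf_{n+1}}=2\normff{f_{n}}$, hence $\normff{f_{n+1}}=2\normff{f_{n}}+8\normff{f_{n-1}}$, whose characteristic roots are $4$ and $-2$; solving and dividing by $\normt{f_{n}}^{4}=4^{n}\normtf{f_{0}}$ gives $\lim_{n}\ADF(f_{n})=-1+\tfrac{2}{3}\bigl(\normff{f_{0}}+\normtt{f_{0}\tf_{0}}\bigr)/\normtf{f_{0}}$, and the same for $g$. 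The bound $\geq\tfrac13$ is the inequality $\normff{f_{0}}+\normtt{f_{0}\tf_{0}}\geq 2\normtf{f_{0}}$: setting $u(\theta)=|f_{0}(e^{i\theta})|^{2}$ and $v(\theta)=u(\theta)+u(\theta+\pi)$, the left side is $\int uv\,d\mu=\tfrac12\int v^{2}\,d\mu\geq\tfrac12\bigl(\int v\,d\mu\bigr)^{2}=2\bigl(\int u\,d\mu\bigr)^{2}=2\normtf{f_{0}}$, using that $u-\tfrac12v$ is $\pi$-antiperiodic while $v$ is $\pi$-periodic, and Cauchy--Schwarz.

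For the crosscorrelation the same bookkeeping on $|f_{n+1}|^{2}|g_{n+1}|^{2}$ and $|f_{n+1}|^{2}|\tg_{n+1}|^{2}$ gives, with $c_{n}=\normtt{f_{n}g_{n}}$, $d_{n}=\normtt{f_{n}\tg_{n}}$, $e_{n}=\Re\int f_{n}\tf_{n}\conj{g_{n}\tg_{n}}$, the recursions $c_{n+1}=2c_{n}+2d_{n}+2e_{n}$ and $d_{n+1}=2c_{n}+2d_{n}-2e_{n}$ (the $\epsilon_{n}$'s in $\int R^{f}_{n}R^{g}_{n}\,d\mu=2e_{n}$ square away). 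The delicate step is the recursion for $e_{n}$: expand $f_{n+1}\tf_{n+1}=f_{n}\tf_{n}+\epsilon_{n}z^{2L_{n}-1}\bigl(\tf_{n}\conj{\tf_{n}}-f_{n}\conj{f_{n}}\bigr)-z^{4L_{n}-2}\conj{f_{n}\tf_{n}}$ and likewise for $g$, then expand $\int f_{n+1}\tf_{n+1}\conj{g_{n+1}\tg_{n+1}}\,d\mu$ into nine terms. Two vanish by a degree count; the two diagonal ones sum to $\int f_{n}\tf_{n}\conj{g_{n}\tg_{n}}\,d\mu+\overline{\int f_{n}\tf_{n}\conj{g_{n}\tg_{n}}\,d\mu}$; the one coupling the two middle summands equals $\int\bigl(\tf_{n}\conj{\tf_{n}}-f_{n}\conj{f_{n}}\bigr)\conj{\bigl(\tg_{n}\conj{\tg_{n}}-g_{n}\conj{g_{n}}\bigr)}\,d\mu=2c_{n}-2d_{n}$; and the remaining four — each coupling $f_{n}\tf_{n}$ or its conjugate to a shift of $\tg_{n}\conj{\tg_{n}}-g_{n}\conj{g_{n}}$, or $\tf_{n}\conj{\tf_{n}}-f_{n}\conj{f_{n}}$ to a shift of $g_{n}\tg_{n}$ or its conjugate — combine, using that $\tf_{n}\conj{\tf_{n}}-f_{n}\conj{f_{n}}$ and $\tg_{n}\conj{\tg_{n}}-g_{n}\conj{g_{n}}$ are real on the circle, into $2i$ times a real integral. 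Taking real parts annihilates this purely imaginary remainder, leaving $e_{n+1}=2e_{n}+2c_{n}-2d_{n}$. Then $c_{n}+d_{n}=4^{n}(c_{0}+d_{0})$, while $q_{n}:=c_{n}-d_{n}$ satisfies $q_{n+1}=4e_{n}$, $e_{n+1}=2e_{n}+2q_{n}$, hence $q_{n+2}=2q_{n+1}+8q_{n}$ (roots $4,-2$ once more); solving gives $\lim_{n}q_{n}/4^{n}=(c_{0}-d_{0}+2e_{0})/3$, so $\lim_{n}c_{n}/4^{n}=(2c_{0}+d_{0}+e_{0})/3$, and dividing by $\normtt{f_{n}}\normtt{g_{n}}=4^{n}\normtt{f_{0}}\normtt{g_{0}}$ gives the claimed limit of $\CDF(f_{n},g_{n})$; the autocorrelation limits are the case $g_{n}=f_{n}$. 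I expect the main obstacle to be exactly this last recursion: correctly tracking which Laurent coefficients of the expanded products survive, and recognizing the four leftover inner products as a purely imaginary combination — without that cancellation the system $(c_{n},d_{n},e_{n})$ would not close, and the limit would involve quantities outside the statement.
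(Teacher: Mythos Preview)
Your proposal is correct and follows essentially the same route as the paper: you track the same three quantities $c_n=\normtt{f_n g_n}$, $d_n=\normtt{f_n\tg_n}$, $e_n=\Re\int f_n\tf_n\conj{g_n\tg_n}$, derive the identical linear recursion (the paper's Lemma~\ref{Albert} with $\sigma_n\tau_n=1$), and extract the limit from the eigenvalues $4$ and $-2$; your $\pi$-periodic/antiperiodic bookkeeping and hand decoupling via $c_n\pm d_n$ are just a cleaner packaging of the paper's technical Lemma~\ref{Eustace} and its explicit $3\times 3$ diagonalization. The lower-bound argument and the recognition that the four residual inner products in the $e_{n+1}$ computation are purely imaginary are likewise the same as the paper's treatment of $I_3$ and $I_4$ in the proof of Lemma~\ref{Albert}.
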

This theorem is proved in Corollary \ref{Samuel} in Section \ref{Lawrence}.  We then use the formula in Theorem \ref{Vivian} and computational searches (see Section \ref{Thomas}) to find families of pairs of Rudin-Shapiro-like Littlewood polynomials whose asymptotic Pursley-Sarwate Criterion is as low as $331/300=1.10333\ldots$, which is quite close to the absolute lower bound in \eqref{Hyeon}.    In contrast, the typical Pursley-Sarwate Criterion of randomly selected long binary sequences is about $2$, and high-performance sequence pairs constructed from finite field characters have been found with asymptotic Pursley-Sarwate Criterion of $7/6$ (see \cite[\S II.E, \S IV.D]{Katz} and \cite[eq.(6)]{Boothby-Katz}).  

The rest of this paper is organized as follows.  The goal of Section \ref{Lawrence} is to prove Theorem \ref{Vivian} above.  This is accomplished by finding recursive relations between various $L^p$ norms associated with our Rudin-Shapiro-like polynomials that arise from the original recursion \eqref{Francis}.

Section \ref{Simon} examines groups of symmetries that preserve the asymptotic correlation behavior when applied to our Rudin-Shapiro-like polynomials.  These are helpful in abbreviating computational searches (reported later in Section \ref{Thomas}) for polynomials with good autocorrelation and crosscorrelation performance.  We organize the good polynomials that we find into orbits modulo the action of our symmetry groups, which makes our reports shorter and more intelligible.

In Section \ref{Thomas}, we present some examples of Rudin-Shapiro-like Littlewood polynomials $f_0,f_1,\ldots$ and $g_0,g_1,\ldots$ such that $\lim_{n\to\infty} \PSC(f_n,g_n)$ is low, which implies simultaneously good autocorrelation and crosscorrelation performance.  This includes the families of polynomials with the exceptionally low asymptotic Pursley-Sarwate Criterion value reported above.

\section{Asymptotic Crosscorrelation Formula}\label{Lawrence}

In this section we prove Theorem \ref{Vivian}, the main theoretical result of this paper.
First we set down some notational conventions.

Throughout this paper, we let $\C[z,z^{-1}]$ denote the ring of Laurent polynomials with coefficients from $\C$.
Because we are working with polynomials on the complex unit circle, if $a(z)=\sum_{j \in \Z} a_j z^j$, then we use $\conj{a(z)}$ as a shorthand for $\sum_{j \in \Z} \conj{a_j} z^{-j}$, $\Re(a(z))$ as a shorthand for $\frac{1}{2} (a(z)+\conj{a(z)})$, and $|a(z)|^2$ as a shorthand for $a(z) \conj{a(z)}$.
We also use $\ta(z)$ as a shorthand for $a(-z)$.
Also recall from the Introduction that if $a(z)=a_0 + a_1 z + \cdots + a_d z^d$ is a polynomial of degree $d$ in $\C[z]$, then $a^\dag(z)$ denotes the conjugate reciprocal polynomial of $a(z)$, that is, $a^\dag(z)=\conj{a_d} + \conj{a_{d-1}} z + \cdots + \conj{a_0} z^d$.

We first note how the transformations $a\mapsto\ta$, $a\mapsto \conj{a}$, and $a\mapsto a^\dag$ relate to and interact with each other.
\begin{lemma}\label{Alice}
If $f(z) \in \C[z]$, then
\begin{enumerate}[(i).]
\item\label{William} $f^\dag(z)=z^{\deg f} \conj{f(z)}$,  
\item\label{Annie} $\widetilde{(f^\dag})(z)= (-z)^{\deg f} \conj{\tf(z)}=(-1)^{\deg f} \cdot (\tf)^\dag(z)$, and  
\item\label{Alexander} $\conj{f^\dag(z)}=z^{-\deg f} f(z)$.
\end{enumerate}
\end{lemma}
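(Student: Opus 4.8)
The plan is to establish (i) by a direct computation from the definition, and then obtain (ii) and (iii) for free by transporting the identity in (i) through the two substitution-type operations $a\mapsto\ta$ and $a\mapsto\conj{a}$ (both of which are ring (anti)homomorphisms induced by substitutions in $z$).

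First I would write $f(z)=a_0+a_1z+\cdots+a_dz^d$ with $d=\deg f$ (so $a_d\neq 0$). Then $f^\dag(z)=\sum_{j=0}^{d}\conj{a_{d-j}}\,z^{j}$ directly from the definition of the conjugate reciprocal polynomial, and the reindexing $k=d-j$ rewrites this as $z^{d}\sum_{k=0}^{d}\conj{a_k}z^{-k}=z^{d}\conj{f(z)}$, since $\conj{f(z)}=\sum_k\conj{a_k}z^{-k}$ by our convention. That proves (i).

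For (iii) I would apply the involution $a(z)\mapsto\conj{a(z)}$ of $\C[z,z^{-1}]$ to both sides of (i); since this operation is a ring homomorphism with $\conj{z^{d}}=z^{-d}$ and $\conj{\conj{f(z)}}=f(z)$, it gives $\conj{f^\dag(z)}=z^{-d}f(z)$ at once. For (ii) I would first observe that $\deg\ta=d$ for any $a\in\C[z]$ of degree $d$, because the leading coefficient of $\ta(z)=a(-z)$ is $(-1)^{d}a_d\neq 0$; hence (i) applies to $\tf$ and yields $(\tf)^\dag(z)=z^{d}\conj{\tf(z)}$. Next I would apply $a\mapsto\ta$ to (i): because the operations $a\mapsto\ta$ and $a\mapsto\conj{a}$ commute (both are induced by substitutions in $z$, and $(-z)^{-1}=-(z^{-1})$), this yields $\widetilde{(f^\dag)}(z)=(-z)^{d}\conj{\tf(z)}=(-1)^{d}z^{d}\conj{\tf(z)}$, and combining with the previous identity gives $\widetilde{(f^\dag)}(z)=(-1)^{d}(\tf)^\dag(z)$, completing (ii).

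There is no real obstacle here; the whole argument is a bookkeeping exercise once (i) is in hand. The only points that merit a moment's care are the commutativity of the bar and tilde operations — equivalently, the sign identity $(-1)^{-k}=(-1)^{k}$ that arises when one pushes $\conj{f(z)}$ through $z\mapsto -z$ — and the remark that passing to $\ta$ preserves degree, which is exactly what licenses the reuse of (i) for $\tf$ in the proof of (ii).
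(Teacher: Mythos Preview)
Your proof is correct and follows essentially the same approach as the paper: a direct coefficient computation for (i), then transporting (i) through the tilde and bar operations to obtain (ii) and (iii). The only cosmetic differences are that you treat (iii) before (ii) and make the commutativity of the tilde and bar operations explicit, whereas the paper leaves that implicit in its substitution $z\mapsto -z$.
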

\begin{proof}
For part \eqref{William}, note that if $d=\deg f$ and $f(z)=f_0 + f_1 z + \cdots +f_d z^d$, then by the definition of the conjugate reciprocal, we have
\begin{align*}
f^\dag(z)
& =\conj{f_d} + \cdots + \conj{f_1} z^{d-1} + \conj{f_0} z^d \\
& = z^d(\conj{f_d} z^{-d} + \cdots + \conj{f_1} z^{-1} + \conj{f_0}) \\
& = z^d \conj{f(z)}.
\end{align*}
For part \eqref{Annie}, use part \eqref{William} to see that 
\begin{align*}
\widetilde{(f^\dag)}(z)
& = (-z)^{\deg f} \conj{f(-z)} \\
& = (-z)^{\deg f} \conj{\tf(z)},
\end{align*}
and then
\begin{align*}
(-z)^{\deg f} \conj{\tf(z)}
& = (-1)^{\deg f} \cdot z^{\deg \tf} \, \conj{\tf(z)} \\
& = (-1)^{\deg f} \cdot (\tf)^\dag(z),
\end{align*}
where we used part \eqref{William} again in the second equality.

For part \eqref{Alexander}, use part \eqref{William} to see that
\begin{align*}
\conj{f^\dag(z)}
& = \conj{z^{\deg f} \conj{f(z)}} \\
& = z^{-\deg f} f(z).\qedhere
\end{align*}
\end{proof}

If $a(z) \in \C[z,z^{-1}]$, say $a(z)=\sum_{j \in \Z} a_j z^j$, then we use $\int a(z)$ as a shorthand for $a_0$.
This is because if we actually perform integration on the complex unit circle, we obtain $a_0$:
\[
\frac{1}{2\pi} \int_0^{2\pi} a(e^{i \theta}) d\theta  = a_0.
\]
In particular note that
\[
\normtt{a(z)} = \int a(z) \conj{a(z)}=\sum_{j \in \Z} |a_j|^2.
\]

It will be important to know that replacing $a$ with $\ta$ changes neither integrals nor norms.
\begin{lemma}\label{Leonard} \quad
\begin{enumerate}[(i).]
\item\label{Martha} For any $f(z) \in \C[z,z^{-1}]$, we have $\int \tf(z) = \int f(z)$.
\item\label{Norbert} For any $f(z) \in \C[z,z^{-1}]$ and any $p \in \R$ with $p \geq 1$, we have $\norm{\tf(z)}{p}=\norm{f(z)}{p}$.
\end{enumerate}
\end{lemma}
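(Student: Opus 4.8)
The plan is to handle the two parts separately, since part~\eqref{Martha} is a purely formal statement about coefficients while part~\eqref{Norbert} is a change of variables in the integral defining the $L^p$ norm.

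For part~\eqref{Martha}, I would write $f(z)=\sum_{j\in\Z} f_j z^j$, so that $\tf(z)=f(-z)=\sum_{j\in\Z}(-1)^j f_j z^j$. The constant term (the coefficient of $z^0$) of $\tf(z)$ is then just $f_0$, which is also the constant term of $f(z)$. Since $\int a(z)$ denotes the constant term of the Laurent polynomial $a(z)$, this immediately yields $\int\tf(z)=f_0=\int f(z)$.

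For part~\eqref{Norbert}, the key observation is that $\tf(e^{i\theta})=f(-e^{i\theta})=f(e^{i(\theta+\pi)})$ for every $\theta\in\R$, so $|\tf(e^{i\theta})|^p=|f(e^{i(\theta+\pi)})|^p$. Substituting $\psi=\theta+\pi$ in the integral defining $\norm{\tf(z)}{p}^p$ turns it into $\frac{1}{2\pi}\int_\pi^{3\pi}|f(e^{i\psi})|^p\,d\psi$, and since $\psi\mapsto|f(e^{i\psi})|^p$ is $2\pi$-periodic (because $\psi\mapsto e^{i\psi}$ is), the integral over $[\pi,3\pi]$ equals the integral over $[0,2\pi]$. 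Hence $\norm{\tf(z)}{p}^p=\norm{f(z)}{p}^p$, and taking $p$-th roots finishes the argument.

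There is no genuine obstacle in either part; both are elementary. One could alternatively deduce the $p=2$ case of~\eqref{Norbert} from part~\eqref{Martha} by checking that $\widetilde{f(z)\conj{f(z)}}=\tf(z)\conj{\tf(z)}$ and then integrating, but the substitution argument above disposes of all $p\geq 1$ uniformly, so that is the route I would take.
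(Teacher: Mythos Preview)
Your proof is correct. The main difference from the paper's argument is one of organization: the paper proves a single change-of-variables identity $\frac{1}{2\pi}\int_0^{2\pi} g(-e^{i\theta})\,d\theta = \frac{1}{2\pi}\int_0^{2\pi} g(e^{i\theta})\,d\theta$ for any integrable $g$ on the unit circle (by splitting $[0,2\pi]$ into two halves and shifting each by $\pm\pi$), and then specializes $g=f$ for part~\eqref{Martha} and $g=|f|^p$ for part~\eqref{Norbert}. You instead treat part~\eqref{Martha} purely formally via the coefficient definition of $\int$, which is shorter and arguably cleaner given how the paper defines that symbol, and you handle part~\eqref{Norbert} by a single shift $\psi=\theta+\pi$ followed by an appeal to $2\pi$-periodicity rather than splitting into two pieces. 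The paper's route has the minor aesthetic advantage of deriving both conclusions from one lemma, while yours avoids any integration at all in part~\eqref{Martha}; mathematically the content is the same.
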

\begin{proof}
If $g(z)$ is any function that is integrable on the complex unit circle, then
\begin{align*}
\frac{1}{2\pi} \int_0^{2\pi} g(-e^{i\theta}) d\theta
& = \frac{1}{2\pi} \int_0^{\pi} g(e^{i(\theta+\pi)}) d\theta + \frac{1}{2\pi} \int_{\pi}^{2 \pi} g(e^{i(\theta-\pi)}) d\theta \\
& = \frac{1}{2\pi} \int_\pi^{2 \pi} g(e^{i\eta}) d\eta + \frac{1}{2\pi} \int_{0}^{\pi} g(e^{i\eta}) d\eta \\
& = \frac{1}{2\pi} \int_0^{2\pi} g(e^{i\theta}) d\theta.
\end{align*}
This proves parts \eqref{Martha} and \eqref{Norbert}, where we set $g(z)=f(z)$ or $g(z)=|f(z)|^p$, respectively.
\end{proof}

If we have two sequences $f_0,f_1,\ldots$ and $g_0,g_1,\ldots$ of Rudin-Shapiro-like polynomials constructed via recursion \eqref{Francis}, then the following lemma tells us how $\normtt{f_{n+1} g_{n+1}}$ is related to $\normtt{f_n g_n}$.
In view of \eqref{David}, this is telling us how $\CDF(f_n,g_n)$ changes in one step of the recursion.
\begin{lemma}\label{Albert}
Suppose that $n$ is a nonnegative integer, and that $f_n(z),g_n(z) \in \C[z]$ are polynomials of length $\ell$, and let $f_{n+1}(z)=f_n(z)+\sigma_n z^\ell f_n^\dag(-z)$ and $g_{n+1}(z)=g_n(z)+\tau_n z^\ell g_n^\dag(-z)$, where $\sigma_n,\tau_n \in \{-1,1\}$.   Then $\normtt{f_{n+1}}=2\normtt{f_n}$, $\normtt{g_{n+1}}=2\normtt{g_n}$, and if we define
\begin{align*}
u_j & =\normtt{f_j g_j}, \\
v_j & =\normtt{f_j \tg_j}, \\
w_j & = \Re \int f_j \tf_j \conj{g_j \tg_j} 
\end{align*}
for $j\in\{n,n+1\}$, then
\begin{enumerate}[(i).]
\item $u_{n+1}=2 u_n + 2 v_n + 2 \sigma_n\tau_n w_n$,
\item $v_{n+1}=2 u_n + 2 v_n - 2 \sigma_n\tau_n w_n$, and
\item $w_{n+1}=2 \sigma_n\tau_n u_n - 2 \sigma_n\tau_n v_n + 2 w_n$.
\end{enumerate}
\end{lemma}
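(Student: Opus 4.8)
The plan is to carry out every computation on the complex unit circle, where $\conj z=z^{-1}$, so that each of the quantities in question becomes the constant term (our $\int$) of a Laurent polynomial. The statements $\normtt{f_{n+1}}=2\normtt{f_n}$ and $\normtt{g_{n+1}}=2\normtt{g_n}$ are the warm-up: substituting $f_{n+1}=f_n+\sigma_n z^\ell f_n^\dag(-z)$ and using Lemma \ref{Alice} to rewrite the second summand in terms of $\conj{\tf_n}$, one obtains on the unit circle
\[
|f_{n+1}|^2 = |f_n|^2 + |\tf_n|^2 + 2\sigma_n\modeg\,\Re\bigl(\izs f_n \tf_n\bigr);
\]
the first two terms integrate to $2\normtt{f_n}$ by Lemma \ref{Leonard}, while the last integrates to $0$ because $f_n\tf_n$ has degree at most $2\ell-2$, so $\izs f_n\tf_n$ has only negative exponents.

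For $u_{n+1}$ and $v_{n+1}$ I would exploit a parity observation about the display above. The term $E_f:=|f_n|^2+|\tf_n|^2$ involves only even powers of $z$ (it is twice the even part of $|f_n|^2$), whereas $O_f:=2\sigma_n\modeg\,\Re(\izs f_n\tf_n)$ involves only odd powers (since $1-2\ell$ is odd and $f_n\tf_n=f_n(z)f_n(-z)$ is even). Writing $|g_{n+1}|^2=E_g+O_g$ in the same way, we moreover have $|\tg_{n+1}|^2=\widetilde{|g_{n+1}|^2}=E_g-O_g$. Hence $u_{n+1}=\int|f_{n+1}|^2|g_{n+1}|^2=\int(E_f+O_f)(E_g+O_g)$, and since the mixed products of an even Laurent polynomial with an odd one have no constant term, $u_{n+1}=\int E_fE_g+\int O_fO_g$ and likewise $v_{n+1}=\int E_fE_g-\int O_fO_g$. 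Expanding $\int E_fE_g$ into four terms of the form $\normtt{a}$ and applying Lemma \ref{Leonard} to see $\normtt{\tf_n g_n}=\normtt{f_n\tg_n}$ and $\normtt{\tf_n\tg_n}=\normtt{f_n g_n}$ gives $\int E_fE_g=2u_n+2v_n$; a short manipulation of real parts, using that $\int(\izs f_n\tf_n)(\izs g_n\tg_n)$ vanishes for degree reasons, gives $\int O_fO_g=2\sigma_n\tau_n w_n$. This yields relations (i) and (ii).

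Relation (iii) for $w_{n+1}$ is the heart of the argument and needs a separate device. Multiplying out $f_{n+1}(z)=f_n(z)+\sigma_n z^\ell f_n^\dag(-z)$ against $\tf_{n+1}(z)=f_{n+1}(-z)=\tf_n(z)+\sigma_n(-1)^\ell z^\ell f_n^\dag(z)$ and applying Lemma \ref{Alice} repeatedly, the various powers of $-1$ conspire to cancel and one gets on the unit circle
\[
f_{n+1}\tf_{n+1} = f_n\tf_n + \sigma_n\modeg\,\zs\bigl(|\tf_n|^2-|f_n|^2\bigr) - \dzs\,\conj{f_n\tf_n},
\]
with the analogous identity for $g$ and $\tau_n$. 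Setting $\phi_n=f_n\tf_n$ and $\psi_n=g_n\tg_n$, I would expand $\phi_{n+1}\conj{\psi_{n+1}}$ into the nine products coming from the three summands of each factor and integrate. Two of these vanish on account of their exponent ranges. The two extreme products integrate to $\int\phi_n\conj{\psi_n}$ and $\int\conj{\phi_n}\psi_n$, which are complex conjugates and so sum to $2\Re\int\phi_n\conj{\psi_n}=2w_n$. The central product integrates to $\sigma_n\tau_n\int(|\tf_n|^2-|f_n|^2)(|\tg_n|^2-|g_n|^2)$, which expands (again via Lemma \ref{Leonard}) to the real number $2\sigma_n\tau_n(u_n-v_n)$. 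The remaining four integrals pair into two expressions, each of the form $\lambda(c-\conj c)$ with $\lambda$ real (here one uses that $|\tf_n|^2-|f_n|^2$, $|\tg_n|^2-|g_n|^2$, $\sigma_n$, $\tau_n$, and $\modeg$ are all real), hence purely imaginary, so they disappear when we take the real part in $w_{n+1}=\Re\int\phi_{n+1}\conj{\psi_{n+1}}$. Collecting terms gives $w_{n+1}=2\sigma_n\tau_n u_n-2\sigma_n\tau_n v_n+2w_n$.

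I expect the main obstacle to be this last computation: getting the closed form for $f_{n+1}\tf_{n+1}$ right requires careful use of how $a\mapsto\ta$, $a\mapsto\conj a$, and $a\mapsto a^\dag$ interact (Lemma \ref{Alice}), with the stray signs $(-1)^{\ell-1}$ and $(-1)^\ell$ cancelling exactly, and then the nine-term bookkeeping must correctly sort the integrals into those that vanish by degree, those that combine under conjugation into real contributions, and those that are purely imaginary and hence drop out of the real part. By comparison, the even/odd splitting behind (i) and (ii) is routine once the formula for $|f_{n+1}|^2$ is in place.
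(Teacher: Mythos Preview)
Your proposal is correct, and for the warm-up $\normtt{f_{n+1}}=2\normtt{f_n}$ and for relation (iii) it follows essentially the same computation as the paper: the paper also derives the closed form
\[
f_{n+1}\tf_{n+1} = f_n\tf_n + \sigma_n\modeg\,\zs\bigl(|\tf_n|^2-|f_n|^2\bigr) - \dzs\,\conj{f_n\tf_n},
\]
expands the resulting nine terms of $f_{n+1}\tf_{n+1}\conj{g_{n+1}\tg_{n+1}}$, and sorts them exactly as you do (two vanish by degree, two combine into $2w_n$, the center gives $2\sigma_n\tau_n(u_n-v_n)$, and the remaining four pair into purely imaginary quantities).

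Where you genuinely diverge is in the treatment of (i) and (ii). The paper does \emph{not} use your even/odd parity decomposition of $|f_{n+1}|^2$. Instead it invokes Lemma~\ref{Leonard} to symmetrize, writing for instance
\[
u_{n+1}=\tfrac{1}{2}\normtt{f_{n+1}g_{n+1}}+\tfrac{1}{2}\normtt{\tf_{n+1}\tg_{n+1}},
\]
and then appeals to a separate technical lemma (Lemma~\ref{Eustace}) that computes
\[
\tfrac{1}{2}\bnormtt{(a+\sigma z^k\conj{\ta})(b+\tau z^k\conj{\tb})}+\tfrac{1}{2}\bnormtt{(\ta-\sigma z^k\conj{a})(\tb-\tau z^k\conj{b})}
\]
in closed form; this lemma is then applied a second time, with one sign flipped, to obtain $v_{n+1}$. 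Your parity observation---that $E_f=|f_n|^2+|\tf_n|^2$ is even in $z$ while $O_f=2\sigma_n\modeg\Re(\izs f_n\tf_n)$ is odd, and that $|\tg_{n+1}|^2=E_g-O_g$---achieves the same end more economically: the fact that $u_{n+1}$ and $v_{n+1}$ differ only in the sign of $\int O_fO_g$ is immediate, and no auxiliary lemma is needed. The paper's route, on the other hand, packages the expansion in a form that could in principle be reused in other contexts. Both arrive at the same identities with comparable effort once the bookkeeping is unwound.
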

\begin{proof}
First observe that Lemma \ref{Alice}\eqref{Annie} shows that
\begin{align}
f_{n+1}(z)& =f_n(z)+\sigma_n \modeg z^{2\ell-1} \conj{\tf_n(z)} \label{Harold} \\
g_{n+1}(z)& =g_n(z)+\tau_n \modeg z^{2\ell-1} \conj{\tg_n(z)}, \nonumber
\end{align}
and so
\begin{align*}
\normtt{f_{n+1}}
& = \int \left|f_n+\sigma_n \modeg z^{2\ell-1} \conj{\tf_n}\right|^2 \\
& = \int |f_n|^2+|\tf_n|^2 + 2 \Re(\sigma_n \modeg z^{1-2\ell} f_n \tf_n) \\
& = 2 \normtt{f_n} + 2\sigma_n \modeg \Re \int z^{1-2\ell} f_n \tf_n
\end{align*}
where the last equality uses Lemma \ref{Leonard}\eqref{Norbert}.  Now observe that $z^{1-2\ell} f_n \tf_n$ is a Laurent polynomial whose terms all have negative powers of $z$ (because $f_n$ is a polynomial of degree $\ell-1$), and so the last integral is zero.  Thus we obtain the desired result that $\normtt{f_{n+1}}=2\normtt{f_n}$.  If one replaces every instance of $f$ with $g$ in the above, one obtains a proof that $\normtt{g_{n+1}}=2\normtt{g_n}$.

Now we prove the recursions involving $u_n$, $v_n$, and $w_n$.  First of all,
\begin{align*}
u_{n+1}
& = \normtt{f_{n+1} g_{n+1}} \\
& = \frac{1}{2} \normtt{f_{n+1} g_{n+1}} + \frac{1}{2} \normtt{\tf_{n+1} \tg_{n+1}} \\
& = \frac{1}{2} \bnormtt{\left(f_n+\sigma_n \modeg \zs \conj{\tf_n}\right)\left(g_n+\tau_n \modeg \zs \conj{\tg_n}\right)} \\
& \qquad+ \frac{1}{2} \bnormtt{\left(\tf_n-\sigma_n\modeg \zs \conj{f_n}\right)\left(\tg_n-\tau_n\modeg \zs \conj{g_n}\right)} \\
& = 2 \normtt{f_n g_n} + 2 \normtt{f_n \tg_n} + 2\sigma_n \tau_n \Re\int f_n \tf_n \conj{g_n \tg_n} \\
& \qquad + 2\sigma_n \tau_n \Re\int z^{2-4\ell} f_n \tf_n g_n \tg_n \\
& = 2 u_n + 2 v_n +  2 \sigma_n\tau_n w_n + 2 \sigma_n\tau_n \Re \int \idzs f_n \tf_n g_n \tg_n,
\end{align*}
where the first equality is the definition of $u_{n+1}$, the second equality uses Lemma \ref{Leonard}\eqref{Norbert}, the third equality uses \eqref{Harold}, and the fourth equality uses technical Lemma \ref{Eustace}, which appears at the end of this section.
Then note that $\idzs f_n \tf_n g_n \tg_n$ is a Laurent polynomial whose terms all have negative powers of $z$ (because $f_n$ and $g_n$ are polynomials of degree $\ell-1$), so that the last integral in our chain of equalities is zero, giving us the desired result.

We also have
\begin{align*}
v_{n+1}
& = \normtt{f_{n+1} \tg_{n+1}} \\
& = \frac{1}{2} \normtt{f_{n+1} \tg_{n+1}} + \frac{1}{2} \normtt{\tf_{n+1} g_{n+1}} \\
& = \frac{1}{2} \bnormtt{\left(f_n+\sigma_n \modeg \zs \conj{\tf_n}\right)\left(\tg_n-\tau_n\modeg \zs \conj{g_n}\right)} \\
& \qquad+ \frac{1}{2} \bnormtt{\left(\tf_n-\sigma_n\modeg \zs \conj{f_n}\right)\left(g_n+\tau_n \modeg \zs \conj{\tg_n}\right)} \\
& = 2 \normtt{f_n \tg_n} + 2 \normtt{f_n g_n} + 2\sigma_n (-\tau_n) \Re\int f_n \tf_n \conj{\tg_n g_n} \\
& \qquad + 2\sigma_n (-\tau_n) \Re\int z^{2-4\ell} f_n \tf_n \tg_n g_n \\
& = 2 u_n + 2 v_n - 2 \sigma_n\tau_n w_n - 2 \sigma_n\tau_n \Re \int \idzs f_n \tf_n g_n \tg_n,
\end{align*}
where the first equality is the definition of $v_{n+1}$, the second equality uses Lemma \ref{Leonard}\eqref{Norbert}, the third equality uses \eqref{Harold}, and the fourth equality uses technical Lemma \ref{Eustace}, which appears at the end of this section.
Then note that $\idzs f_n \tf_n g_n \tg_n$ is a Laurent polynomial whose terms all have negative powers of $z$ (because $f_n$ and $g_n$ are polynomials of degree $\ell-1$), so that the last integral in our chain of equalities is zero, giving us the desired result.

Finally, we use similar arguments to obtain
\begin{align*}
w_{n+1}
& =\Re \int f_{n+1} \tf_{n+1} \conj{g_{n+1} \tg_{n+1}} \\
& = \Re \int \left(f_n+\sigma_n \modeg \zs \conj{\tf_n}\right) \left(\tf_n-\sigma_n \modeg \zs \conj{f_n}\right)  \\ 
& \qquad\qquad \times \left(\conj{g_n}+\tau_n \modeg \izs \tg_n\right) \left(\conj{\tg_n}-\tau_n \modeg \izs g_n\right)  \\
& = \Re \int \left[f_n \tf_n - \dzs \conj{f_n \tf_n} +\sigma_n \modeg \zs \left(|\tf_n|^2 -|f_n|^2\right)\right] \\
& \qquad\qquad \times \left[\conj{g_n \tg_n} - \idzs g_n \tg_n +\tau_n \modeg \izs \left(|\tg_n|^2 -|g_n|^2\right)\right]  \\
& = I_1 + \sigma_n\tau_n I_2 + \tau_n \modeg I_3 + \sigma_n \modeg I_4,
\end{align*}
where
\begin{align*}
I_1 & = \Re \int \left(f_n \tf_n - \dzs \conj{f_n \tf_n}\right) \left(\conj{g_n \tg_n} - \idzs g_n \tg_n\right) \\
I_2 & = \Re \int \left(|\tf_n|^2 -|f_n|^2\right) \left(|\tg_n|^2- |g_n|^2 \right)\\
I_3 & = \Re \int \left(f_n \tf_n - \dzs \conj{f_n \tf_n}\right) \izs \left(|\tg_n|^2- |g_n|^2 \right) \\
I_4 & = \Re \int \zs \left(|\tf_n|^2 -|f_n|^2\right) \left(\conj{g_n \tg_n} - \idzs g_n \tg_n\right).
\end{align*}
Now we compute each of these four integrals.
\begin{align*}
I_1
& = \Re \int \left(f_n \tf_n - \dzs \conj{f_n \tf_n}\right) \left(\conj{g_n \tg_n} - \idzs g_n \tg_n\right) \\
& = 2 \Re \int f_n \tf_n \conj{g_n \tg_n} - 2 \Re \int \idzs f_n \tf_n g_n \tg_n,
\end{align*}
and note that $\idzs f_n \tf_n g_n \tg_n$ is a Laurent polynomial whose terms all have negative powers of $z$ (because $f_n$ and $g_n$ are of length $\ell-1$), so that the last integral is zero, and thus
\[
I_1 = 2 w_n.
\]
Then
\begin{align*}
I_2
& = \normtt{\tf_n \tg_n}+\normtt{f_n g_n} - \normtt{\tf_n g_n} -\normtt{f_n \tg_n}\\
& = 2 u_n - 2 v_n,
\end{align*}
where the second equality is due to Lemma \ref{Leonard}\eqref{Norbert}.

Let us examine the integrand in the definition of $I_3$, which is
\[
h=\left(\izs f_n \tf_n - \zs \conj{f_n \tf_n}\right) \left(|\tg_n|^2-|g_n|^2\right).
\]
If one conjugates this, one obtains
\[
\left(\zs \conj{f_n \tf_n} - \izs f_n \tf_n\right) \left(|\tg_n|^2-|g_n|^2\right),
\]
which is just $-h$.  So $h$ has purely imaginary values on the unit circle, and thus $I_3=\Re \int h=0$.
The same argument shows that $I_4=0$, and so, putting all our results together, we have
\begin{align*}
w_{n+1}
& = I_1 + \sigma_n\tau_n I_2 + \tau_n \modeg I_3 + \sigma_n \modeg I_4 \\
& = 2 w_n + \sigma_n\tau_n(2 u_n-2 v_n) + 0 + 0.\qedhere
\end{align*}
\end{proof}
The above lemma allows us to compute crosscorrelation demerit factors for pairs of Rudin-Shapiro-like polynomials constructed via recursion \eqref{Francis}.
\begin{theorem}\label{Sally}
Let $f_0, g_0 \in \C[z]$ be polynomials of equal length having nonzero constant coefficients.
Let $\sigma_0,\sigma_1,\ldots$ be a sequence of values from $\{-1,1\}$, and suppose that $f_n(z)$ and $g_n(z)$ are defined recursively for all $n \in \N$ by
\begin{align*}
f_{n+1}(z) & =f_n(z)+\sigma_n z^{\len f_n} f_n^\dag(-z) \\
g_{n+1}(z) & =g_n(z)+\sigma_n z^{\len g_n} g_n^\dag(-z).
\end{align*}
Then
\begin{multline*}
\cnrd{f_n}{g_n} = \frac{2 \normtt{f_0 g_0}+\normtt{f_0 \tg_0} +  \Re \int f_0 \tf_0 \conj{g_0 \tg_0}}{3 \normtt{f_0} \normtt{g_0}} \\
+\left(-\frac{1}{2}\right)^n \frac{\normtt{f_0 g_0} -\normtt{f_0 \tg_0} - \Re \int f_0 \tf_0 \conj{g_0 \tg_0}}{3 \normtt{f_0} \normtt{g_0}}.
\end{multline*}
\end{theorem}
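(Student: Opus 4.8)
The plan is to specialize Lemma~\ref{Albert} to the case $\tau_n=\sigma_n$ (so that $\sigma_n\tau_n=1$ at every stage), extract from it a constant-coefficient linear recursion for the triple $(u_n,v_n,w_n)$, solve that recursion in closed form, and then divide by the denominator of $\cnr{f_n}{g_n}$, which equals $4^n\normtt{f_0}\normtt{g_0}$ because Lemma~\ref{Albert} also gives $\normtt{f_n}=2^n\normtt{f_0}$ and $\normtt{g_n}=2^n\normtt{g_0}$.

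With $\sigma_n\tau_n=1$, Lemma~\ref{Albert} reads $(u_{n+1},v_{n+1},w_{n+1})=M\,(u_n,v_n,w_n)$, where $M=\begin{pmatrix}2&2&2\\2&2&-2\\2&-2&2\end{pmatrix}$, so $(u_n,v_n,w_n)=M^n(u_0,v_0,w_0)$ for all $n$ (viewing these as column vectors). I would then diagonalize $M$: a short computation shows its eigenvalues are $4$ (with eigenspace spanned by $(1,1,0)$ and $(1,0,1)$) and $-2$ (with eigenvector $(-1,1,1)$). Expanding $(u_0,v_0,w_0)$ in this eigenbasis and applying $M^n$ yields, after collecting the first coordinate,
\[
u_n=\frac{4^n}{3}(2u_0+v_0+w_0)+\frac{(-2)^n}{3}(u_0-v_0-w_0).
\]
Equivalently, and slightly more cheaply, one notes that $u_n+v_n$ satisfies $x_{n+1}=4x_n$ while $u_n-v_n$ and $w_n$ satisfy a $2\times2$ linear system whose matrix has eigenvalues $4$ and $-2$; since only $u_n$ is needed for $\cnr{f_n}{g_n}$, recovering $u_n=\tfrac12\bigl((u_n+v_n)+(u_n-v_n)\bigr)$ from these two pieces suffices.

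Dividing the displayed formula for $u_n$ by $\normtt{f_n}\normtt{g_n}=4^n\normtt{f_0}\normtt{g_0}$ produces
\[
\cnr{f_n}{g_n}=\frac{2u_0+v_0+w_0}{3\normtt{f_0}\normtt{g_0}}+\left(-\tfrac12\right)^{n}\frac{u_0-v_0-w_0}{3\normtt{f_0}\normtt{g_0}},
\]
and substituting the definitions $u_0=\normtt{f_0 g_0}$, $v_0=\normtt{f_0\tg_0}$, $w_0=\Re\int f_0\tf_0\conj{g_0\tg_0}$ gives precisely the asserted identity.

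There is no substantive obstacle here: Lemma~\ref{Albert} has already done the analytic work, and what remains is the routine linear algebra of diagonalizing a small matrix and performing a change of basis. The points requiring care are verifying that $\tau_n=\sigma_n$ really does collapse the variable factor $\sigma_n\tau_n$ to $1$ so that the recursion has constant coefficients, getting the eigen-decomposition and the resulting signs right, and remembering that the denominator grows like $4^n$ rather than $2^n$, which is exactly what converts the eigenvalue $-2$ into the factor $(-1/2)^n$ in the final formula.
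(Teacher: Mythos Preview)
Your proposal is correct and matches the paper's own proof essentially step for step: the paper applies Lemma~\ref{Albert} with $\tau_n=\sigma_n$, writes the resulting recursion as multiplication by the same $3\times3$ matrix, diagonalizes it to find eigenvalues $4$ and $-2$ with exactly the eigenvectors you list, extracts the same closed form for $u_n$, and divides by $\normtt{f_n}\normtt{g_n}=4^n\normtt{f_0}\normtt{g_0}$. The only small item you leave implicit is the inductive check that $f_n$ and $g_n$ retain equal length and nonzero constant coefficient at every stage so that Lemma~\ref{Albert} applies repeatedly; the paper states this explicitly, but it is routine.
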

\begin{proof}
Since $f_0$ and $g_0$ have nonzero constant coefficients and are of the same length, induction shows that for every $n$, the polynomials $f_n$ and $g_n$ have nonzero constant coefficients and both are of length $2^n \len f_0=2^n \len g_0$, as observed in \eqref{Henry} in the Introduction.
Thus we may apply Lemma \ref{Albert} repeatedly to the pairs $(f_n,g_n)$ for every $n$.
  
Let $u_j$, $v_j$, and $w_j$ be as defined in Lemma \ref{Albert}. We want to calculate $u_n$, and the lemma says that
\[
\begin{pmatrix} u_{n+1} \\ v_{n+1} \\ w_{n+1} \end{pmatrix} = A \begin{pmatrix} u_n \\ v_n \\ w_n \end{pmatrix},
\]
where
\[
A=\begin{pmatrix} 2 & 2 & 2 \\ 2 & 2 & -2 \\ 2 & -2 & 2 \end{pmatrix}.
\]
Now $A=B \Lambda B^{-1}$, where
\[
B=\begin{pmatrix} -1 & 1 & 1 \\ 1 & 1 & 0 \\ 1 & 0 & 1 \end{pmatrix} \text{\qquad and \qquad} \Lambda = \begin{pmatrix} -2 & 0 & 0 \\ 0 & 4 & 0 \\ 0 & 0 & 4 \end{pmatrix}.
\]
So
\begin{align*}
\begin{pmatrix} u_n \\ v_n \\ w_n \end{pmatrix}
& = A^n \begin{pmatrix} u_0 \\ v_0 \\ w_0 \end{pmatrix} \\
& = B \Lambda^n B^{-1}  \begin{pmatrix} u_0 \\ v_0 \\ w_0 \end{pmatrix},
\end{align*}
and so
\[
\normtt{f_n g_n} = u_n = \frac{4^n(2 u_0+v_0+w_0) + (-2)^n (u_0-v_0-w_0)}{3}.
\]

Repeated use of Lemma \ref{Albert} also shows that $\normtt{f_n}=2^n \normtt{f_0}$ and $\normtt{g_n}=2^n\normtt{g_0}$, and these norms are nonzero since $f_0$ and $g_0$ are nonzero, so that
\[
\cnrd{f_n}{g_n}= \frac{(2 u_0+v_0+w_0)+(-1/2)^n (u_0-v_0-w_0)}{3 \normtt{f_0} \normtt{g_0}},
\]
and when one substitutes the values of $u_0$, $v_0$, and $w_0$ as defined in Lemma \ref{Albert}, then one obtains the desired result.
\end{proof}
If $f_0=g_0$, we are considering autocorrelation.
When we specialize to this case and also specialize to the case where $f_0$ is a Littlewood polynomial, then we recover the results of Borwein and Mossinghoff \cite[Theorem 1 and Corollary 1]{Borwein-Mossinghoff}.
\begin{corollary}[Borwein-Mossinghoff (2000)]\label{Barbara}
Suppose that $f_0(z)$ is a Littlewood polynomial and that $f_n(z)$ is defined recursively for all $n \in \N$ by
\[
f_{n+1}(z) = f_n(z)+z^{\len f_n} f_n^\dag(-z).
\]
Then
\[
\frac{\normff{f_n}}{\normtf{f_n}} = \frac{2}{3} \cdot \frac{\normff{f_0}+\normtt{f_0 \tf_0}}{\normtf{f_0}} +\left(-\frac{1}{2}\right)^n \cdot \frac{1}{3} \cdot \frac{\normff{f_0} -2 \normtt{f_0 \tf_0}}{\normtf{f_0}}.
\]
\end{corollary}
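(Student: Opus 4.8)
The plan is to obtain Corollary \ref{Barbara} as a direct specialization of Theorem \ref{Sally}, taking $g_0=f_0$ and choosing all the signs to be $1$. First I would note that a Littlewood polynomial has all coefficients in $\{-1,1\}$, so in particular its constant coefficient is nonzero; hence Theorem \ref{Sally} applies with both $f_0$ and $g_0$ equal to the given Littlewood polynomial and with $\sigma_0=\sigma_1=\cdots=1$. With these choices the recursion of Theorem \ref{Sally} becomes precisely $f_{n+1}(z)=f_n(z)+z^{\len f_n}f_n^\dag(-z)$, and since $g_0=f_0$ and the two recursions use the same signs, a trivial induction gives $g_n=f_n$ for every $n$. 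Thus the left-hand side $\cnrd{f_n}{g_n}$ of Theorem \ref{Sally} becomes $\normtt{f_n^2}/\normtt{f_n}^2$.

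Next I would record the elementary norm identities needed to pass between the $L^2$ quantities appearing in Theorem \ref{Sally} and the $L^4$ quantities appearing in the Corollary. On the complex unit circle $|a(e^{i\theta})|^2=|a(e^{i\theta})^2|$, so for any polynomial $a$ one has $\normff{a}=\normtt{a^2}$ and $\normtf{a}=\normtt{a}^2$. Applying both identities with $a=f_n$ shows that $\normtt{f_n^2}/\normtt{f_n}^2=\normff{f_n}/\normtf{f_n}$, which is exactly the left-hand side of the Corollary; applying them with $a=f_0$ shows $\normtt{f_0^2}=\normff{f_0}$ and $\normtt{f_0}\normtt{g_0}=\normtt{f_0}^2=\normtf{f_0}$. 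I would also observe that $\int f_0\tf_0\conj{f_0\tf_0}=\normtt{f_0\tf_0}$ is a nonnegative real number (since $\int a\conj a=\normtt{a}$ for every Laurent polynomial $a$), so once $g_0$ is set equal to $f_0$ the operator $\Re$ may be dropped from the term $\Re\int f_0\tf_0\conj{g_0\tg_0}$ of Theorem \ref{Sally}.

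Finally I would substitute $g_0=f_0$ into the formula of Theorem \ref{Sally} and apply these identities term by term: the numerator $2\normtt{f_0 g_0}+\normtt{f_0\tg_0}+\Re\int f_0\tf_0\conj{g_0\tg_0}$ of the constant term collapses to $2\normff{f_0}+2\normtt{f_0\tf_0}$, and dividing by $3\normtt{f_0}\normtt{g_0}=3\normtf{f_0}$ yields $\tfrac23\bigl(\normff{f_0}+\normtt{f_0\tf_0}\bigr)/\normtf{f_0}$; likewise the numerator $\normtt{f_0 g_0}-\normtt{f_0\tg_0}-\Re\int f_0\tf_0\conj{g_0\tg_0}$ of the coefficient of $(-1/2)^n$ collapses to $\normff{f_0}-2\normtt{f_0\tf_0}$, giving $\tfrac13\bigl(\normff{f_0}-2\normtt{f_0\tf_0}\bigr)/\normtf{f_0}$. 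Collecting the two contributions reproduces the displayed formula. There is no genuine obstacle here; the only care required is in the algebraic bookkeeping relating $L^2$ norms of products of polynomials to $L^4$ norms, and in checking that the mixed term is exactly $\normtt{f_0\tf_0}$ rather than merely its real part. (In fact the Littlewood hypothesis is used only to guarantee the nonzero constant coefficient demanded by Theorem \ref{Sally}, and the same identity holds for any $f_0\in\C[z]$ with nonzero constant coefficient.)
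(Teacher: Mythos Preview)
Your proposal is correct and follows essentially the same approach as the paper: the paper presents Corollary \ref{Barbara} as the specialization of Theorem \ref{Sally} to the case $g_0=f_0$ (and, as you note, $\sigma_n=1$), and the norm identities $\normtt{f_0 f_0}=\normff{f_0}$ and $\Re\int f_0\tf_0\conj{f_0\tf_0}=\normtt{f_0\tf_0}$ you spell out are exactly the ones the paper invokes (see the proof of Corollary \ref{Samuel}). Your write-up is in fact more detailed than the paper's, which simply states that specializing Theorem \ref{Sally} recovers the Borwein--Mossinghoff result.
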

\begin{remark}
Borwein and Mossinghoff use $\normtt{f_0 \widetilde{f_0^*}}$ instead of our $\normtt{f_0 \tf_0}$, but it is not hard to see that these are equal because $\widetilde{f^*}(z)=\widetilde{f^\dag}(z)=(-z)^{\deg f} \conj{\tf(z)}$ so that $|\widetilde{f^*}|^2=|\widetilde{f}|^2$ for any Littlewood polynomial $f$ (see Lemma \ref{Alice}\eqref{Annie}).
\end{remark}
Our results now allow us to compute limiting autocorrelation and crosscorrelation demerit factors.
The following corollary contains all the results that we presented in Theorem \ref{Vivian} in the Introduction.
\begin{corollary}\label{Samuel}
Let $f_0, g_0 \in \C[z]$ be polynomials of equal length having nonzero constant coefficients.
Let $\sigma_0,\sigma_1,\ldots$ be a sequence of values from $\{-1,1\}$, and suppose that $f_n(z)$ and $g_n(z)$ are defined recursively for all $n \in \N$ by
\begin{align*}
f_{n+1}(z) & =f_n(z)+\sigma_n z^{\len f_n} f_n^\dag(-z) \\
g_{n+1}(z) & =g_n(z)+\sigma_n z^{\len g_n} g_n^\dag(-z).
\end{align*}
Then
\begin{align*}
\lim_{n \to \infty} \ADF(f_n) & = -1 + \frac{2}{3} \cdot \frac{\normff{f_0}+ \normtt{f_0 \tf_0}}{ \normtf{f_0}} \geq \frac{1}{3}, \\
\lim_{n \to \infty} \ADF(g_n) & = -1 + \frac{2}{3} \cdot \frac{\normff{g_0}+ \normtt{g_0 \tg_0}}{ \normtf{g_0}} \geq \frac{1}{3}, \\
\lim_{n \to \infty} \CDF(f_n,g_n) & = \frac{2 \normtt{f_0 g_0}+\normtt{f_0 \tg_0} +  \Re \int f_0 \tf_0 \conj{g_0 \tg_0}}{3 \normtt{f_0} \normtt{g_0}},
\end{align*}
so that
\begin{align*}
\lim_{n \to\infty} & \PSC(f_n,g_n) =  \frac{2 \normtt{f_0 g_0}+\normtt{f_0 \tg_0} +  \Re \int f_0 \tf_0 \conj{g_0 \tg_0}}{3 \normtt{f_0} \normtt{g_0}} \\ & + \frac{\sqrt{\left(2\normff{f_0}+2 \normtt{f_0 \tf_0}-3\normtf{f_0}\right) \left(2\normff{g_0}+2\normtt{g_0 \tg_0}-3\normtf{g_0}\right)}}{3 \normtt{f_0} \normtt{g_0}}.
\end{align*}
\end{corollary}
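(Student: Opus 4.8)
The plan is to derive Corollary~\ref{Samuel} entirely from Theorem~\ref{Sally}, together with one elementary inequality. The crosscorrelation limit comes for free: in the formula of Theorem~\ref{Sally} the second summand carries the factor $(-1/2)^n$, which tends to $0$, so by \eqref{David} the quantity $\CDF(f_n,g_n)$ converges to the first summand, which is exactly the claimed limit. All denominators are nonzero, since $f_0$ and $g_0$ have nonzero constant coefficients and hence $\normtt{f_0},\normtt{g_0}\neq 0$.

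For the two autocorrelation limits I would specialize Theorem~\ref{Sally} to the case $g_0=f_0$. Both recursions then use the same data, so an easy induction gives $g_n=f_n$ for all $n$, and the theorem computes $\CDF(f_n,f_n)$. With $g_0=f_0$ in that formula one has $\normtt{f_0 g_0}=\normtt{f_0 f_0}=\normff{f_0}$ (both are $\tfrac1{2\pi}\int_0^{2\pi}|f_0(e^{i\theta})|^4\,d\theta$), $\normtt{f_0\tg_0}=\normtt{f_0\tf_0}$, and $\Re\int f_0\tf_0\conj{g_0\tg_0}=\Re\int f_0\tf_0\conj{f_0\tf_0}=\normtt{f_0\tf_0}$, while the denominator $3\normtt{f_0}\normtt{g_0}$ becomes $3\normtf{f_0}$. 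Letting $n\to\infty$ gives $\lim_n\CDF(f_n,f_n)=\tfrac23\cdot\tfrac{\normff{f_0}+\normtt{f_0\tf_0}}{\normtf{f_0}}$, and since $\ADF(f_n)=\CDF(f_n,f_n)-1$ by \eqref{Linda}, the stated limit for $\ADF(f_n)$ follows; the same computation with $g$ in place of $f$ handles $\ADF(g_n)$.

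The lower bound $\lim_n\ADF(f_n)\geq\tfrac13$ is equivalent to the inequality $\normff{f_0}+\normtt{f_0\tf_0}\geq 2\normtf{f_0}$ for every nonzero $f_0\in\C[z]$, and this is the one step that is not pure bookkeeping. To prove it, set $p(\theta)=|f_0(e^{i\theta})|^2\geq 0$ and $q(\theta)=p(\theta)+p(\theta+\pi)$, which is $\pi$-periodic. Since $-e^{i\theta}=e^{i(\theta+\pi)}$, the definitions (and Lemma~\ref{Leonard} for the harmless relabellings) give $\normff{f_0}=\tfrac1{2\pi}\int_0^{2\pi}p(\theta)^2\,d\theta$, $\normtt{f_0\tf_0}=\tfrac1{2\pi}\int_0^{2\pi}p(\theta)p(\theta+\pi)\,d\theta$, and $\normtf{f_0}=\bar{p}^2$ with $\bar{p}=\tfrac1{2\pi}\int_0^{2\pi}p(\theta)\,d\theta$. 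Then $\normff{f_0}+\normtt{f_0\tf_0}=\tfrac1{2\pi}\int_0^{2\pi}p(\theta)q(\theta)\,d\theta$; averaging this integral with its image under $\theta\mapsto\theta+\pi$ (using $q(\theta+\pi)=q(\theta)$) rewrites it as $\tfrac12\cdot\tfrac1{2\pi}\int_0^{2\pi}q(\theta)^2\,d\theta$, while $2\normtf{f_0}=\tfrac12\bar{q}^2$ because $\bar{q}:=\tfrac1{2\pi}\int_0^{2\pi}q(\theta)\,d\theta=2\bar{p}$. Thus the inequality is exactly $\tfrac1{2\pi}\int_0^{2\pi}q(\theta)^2\,d\theta\geq\bar{q}^2$, i.e.\ that the variance of $q$ is nonnegative (Cauchy--Schwarz). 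The identical argument applies to $g_0$.

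Finally, for the Pursley--Sarwate value, rewrite $\lim_n\ADF(f_n)=\tfrac{2\normff{f_0}+2\normtt{f_0\tf_0}-3\normtf{f_0}}{3\normtf{f_0}}$ and likewise for $g_0$; by the previous paragraph both numerators are nonnegative. Since products and square roots of convergent real sequences pass to the limit, $\lim_n\PSC(f_n,g_n)=\sqrt{(\lim_n\ADF(f_n))(\lim_n\ADF(g_n))}+\lim_n\CDF(f_n,g_n)$, and using $\sqrt{\normtf{f_0}}=\normtt{f_0}$ and $\sqrt{\normtf{g_0}}=\normtt{g_0}$ to pull $\tfrac1{3\normtt{f_0}\normtt{g_0}}$ out of the square root, then adding the crosscorrelation limit, produces the displayed formula. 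I expect the only real obstacle to be the inequality $\normff{f_0}+\normtt{f_0\tf_0}\geq 2\normtf{f_0}$, but the symmetrization through $q(\theta)$ collapses it to nonnegativity of a variance; everything else is substitution into Theorem~\ref{Sally}.
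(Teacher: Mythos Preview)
Your proof is correct and follows essentially the same route as the paper's: both derive the three limits by specializing Theorem~\ref{Sally} (with $g_0=f_0$ for the autocorrelation limits and \eqref{Linda}), and both reduce the lower bound $\normff{f_0}+\normtt{f_0\tf_0}\geq 2\normtf{f_0}$ to the elementary inequality $\frac{1}{2\pi}\int q^2\geq\bigl(\frac{1}{2\pi}\int q\bigr)^2$ for $q=|f_0|^2+|\tf_0|^2$. The paper phrases this last step as $\norm{q}{2}\geq\norm{q}{1}$ via Jensen's inequality, while you phrase it as nonnegativity of variance, but these are the same inequality.
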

\begin{proof}
The limiting crosscorrelation demerit factor is clear from Theorem \ref{Sally} since the ratio of norms calculated there is the crosscorrelation demerit factor by \eqref{David}.  For the limiting autocorrelation demerit factors, one again uses Theorem \ref{Sally}, but now one sets $f_n=g_n$ for all $n$ in that theorem, and combines the result thus obtained with the fact from \eqref{Linda} that $\ADF(f_n)=\CDF(f_n,f_n)-1$ along with the observations that $\normtt{f_0 f_0}=\normff{f_0}$ and $\Re \int f_0 \tf_0 \conj{f_0 \tf_0}=\normtt{f_0 \tf_0}$.  The limiting Pursley-Sarwate Criterion follows immediately from the definition in \eqref{Paul} and the limits on the autocorrelation and crosscorrelation demerit factors.

To obtain the lower bounds on the limiting autocorrelation demerit factors, one notes that for any $f(z) \in \C[z]$, we have
\begin{align*}
\normff{f}+ \normtt{f \tf}
& = \frac{1}{2} \normff{f} + \frac{1}{2} \normff{\tf} + \normtt{f \tf} \\
& = \frac{1}{2} \normtt{(|f(z)|^2 + |\tf(z)|^2)}  \\
& \geq \frac{1}{2} \normot{(|f(z)|^2 + |\tf(z)|^2)}  \\
& = \frac{1}{2} \left(\normtt{f}+\normtt{\tf}\right)^2 \\
& = \frac{1}{2} \left(2 \normtt{f}\right)^2 \\
& = 2 \normtf{f},
\end{align*}
where we use Lemma \ref{Leonard}\eqref{Norbert} in the first and penultimate equalities, and the inequality uses the fact that the $L^2$ norm is always at least as large as the $L^1$ norm (by Jensen's inequality) because we are working on a space of measure $1$.
Therefore, $\left(\normff{f}+ \normtt{f \tf}\right)/\normtf{f} \geq 2$, which proves our lower bounds on the limiting values autocorrelation demerit factors.
\end{proof}
Notice that although Corollary \ref{Samuel} has lower bounds on limiting autocorrelation demerit factors, no lower bound for the crosscorrelation demerit factor is given.  This is because the only lower bound that could have been given is the trivial lower bound of $0$, for the following example shows that one can obtain pairs of stems whose limiting crosscorrelation demerit factors are arbitrarily close to $0$.
\begin{proposition}\label{Elaine}
Let $k$ be a positive integer, let $f_0(z)=(1-z^{4 k})/(1-z)$, and $g_0(z)=(1-z)(1-z^2)(1-z^{4 k})/(1-z^4)$.
If $f_n(z)$ and $g_n(z)$ are defined recursively for all $n \in \N$ by
\begin{align*}
f_{n+1}(z) & =f_n(z)+ z^{\len f_n} f_n^\dag(-z) \\
g_{n+1}(z) & =g_n(z)+ z^{\len g_n} g_n^\dag(-z),
\end{align*}
then
\[
\lim_{n \to \infty} \CDF(f_n,g_n) = \frac{1}{3 k}.
\]
\end{proposition}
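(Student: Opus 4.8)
The plan is to reduce everything to Corollary~\ref{Samuel} and then evaluate the five explicit quantities attached to the seeds $f_0$ and $g_0$. First I would unwind the rational expressions: expanding gives $f_0(z)=1+z+\cdots+z^{4k-1}$ and $g_0(z)=(1-z)(1-z^2)(1+z^4+\cdots+z^{4k-4})$, so both are polynomials of length $4k$ with constant coefficient $1$. Hence Corollary~\ref{Samuel} (with $\sigma_n=1$ for every $n$) applies and gives
\[
\lim_{n\to\infty}\CDF(f_n,g_n)=\frac{2\normtt{f_0 g_0}+\normtt{f_0\tg_0}+\Re\int f_0\tf_0\conj{g_0\tg_0}}{3\normtt{f_0}\normtt{g_0}}.
\]
Since $f_0$ and $g_0$ are sequences of $4k$ terms, each $\pm 1$, we have $\normtt{f_0}=\normtt{g_0}=4k$, so the denominator is $48k^2$; it then remains only to show the numerator equals $16k$.

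The engine of the computation is that, after cancellation (using $\tf_0(z)=(1-z^{4k})/(1+z)$ and $\tg_0(z)=(1+z)(1-z^2)(1-z^{4k})/(1-z^4)$), each relevant product has a clean closed form:
\begin{gather*}
f_0 g_0=\frac{(1-z^{4k})^2}{1+z^2},\qquad f_0\tg_0=\frac{(1+z)^2(1-z^{4k})^2}{1-z^4},\\
f_0\tf_0=\frac{(1-z^{4k})^2}{1-z^2},\qquad g_0\tg_0=\frac{(1-z^2)(1-z^{4k})^2}{(1+z^2)^2}.
\end{gather*}
Applying the finite geometric identities $(1-z^{4k})/(1\pm z^2)=\sum_{j=0}^{2k-1}(\mp z^2)^j$ and $(1-z^{4k})/(1-z^4)=\sum_{j=0}^{k-1}z^{4j}$ exhibits both $f_0 g_0$ and $f_0\tg_0$ as $q(z)-z^{4k}q(z)$ for a polynomial $q$ of degree $<4k$ with explicitly known coefficients (so the two halves have disjoint supports); counting the squared coefficients then gives $\normtt{f_0 g_0}=4k$ and $\normtt{f_0\tg_0}=12k$ with almost no work.

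The cross term $\Re\int f_0\tf_0\conj{g_0\tg_0}$ is the step I expect to be the real obstacle, since $g_0\tg_0=(1-z^2)\bigl(\sum_{j=0}^{2k-1}(-z^2)^j\bigr)^2$ has coefficients growing like $\pm(2m+1)$, and a naive expansion leads to alternating sums. The clean route is to work on the unit circle: there $\conj{g_0\tg_0}$ is obtained from the closed form for $g_0\tg_0$ by $z\mapsto 1/z$, and clearing denominators collapses the product to
\[
f_0\tf_0\conj{g_0\tg_0}=-z^{2-8k}\,(f_0 g_0)^2,
\]
so the integral (the constant term) equals $-[z^{8k-2}]\bigl((f_0 g_0)(z)^2\bigr)$. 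Because $f_0 g_0=(1-z^{4k})\sum_{j=0}^{2k-1}(-z^2)^j$ has degree exactly $8k-2$ and coefficients $\pm 1$ supported on the even powers (with a simple sign pattern), the coefficient of $z^{8k-2}$ in its square is a sum of $4k$ products, each of which one checks to be $+1$, whence $\Re\int f_0\tf_0\conj{g_0\tg_0}=-4k$. (Alternatively, expand the coefficient sequences of $f_0\tf_0$ and of $g_0\tg_0$ directly and evaluate the resulting sum of the type $\sum_m(-1)^m(2m+1)$ by pairing consecutive terms.) Assembling the pieces, the numerator is $2(4k)+12k+(-4k)=16k$, and the limit is $16k/48k^2=1/(3k)$, as claimed.
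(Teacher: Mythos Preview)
Your proof is correct and follows essentially the same route as the paper: both apply Corollary~\ref{Samuel} and compute $\normtt{f_0}$, $\normtt{g_0}$, $\normtt{f_0 g_0}$, $\normtt{f_0\tg_0}$, and $\Re\int f_0\tf_0\conj{g_0\tg_0}$ by simplifying the rational expressions exactly as you do. The one place the paper is slicker is the integral term: your identity $f_0\tf_0\conj{g_0\tg_0}=-z^{2-8k}(f_0 g_0)^2$ is in fact $-|f_0 g_0|^2$ on the unit circle (since $f_0 g_0$ has real coefficients and degree $8k-2$, so $\conj{f_0 g_0}=z^{2-8k}f_0 g_0$), whence the integral is simply $-\normtt{f_0 g_0}=-4k$ with no coefficient-chasing needed.
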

We shall prove this after some brief comments.
\begin{remark}\label{Raphael}
Note that $f_0$ in Proposition \ref{Elaine} is a Littlewood polynomial representing a sequence of length $4 k$ whose terms are all $1$.  And $g_0$ is Littlewood polynomial representing a sequence of length $4 k$ consisting of $k$ repetitions of the smaller sequence $(1,-1,-1,1)$.

Proposition \ref{Elaine} shows that we can find a pair of stems whose limiting crosscorrelation demerit factor is as close to $0$ as we like simply by choosing a sufficiently high value of $k$ when we define $f_0$ and $g_0$.

In view of the Pursley-Sarwate bound \eqref{Nancy}, we know that the autocorrelation performance for such stems cannot be exceptionally good, and in fact, one can use Corollary \ref{Samuel} to calculate the limiting autocorrelation demerit factors for the stems from seeds $f_0$ and $g_0$ described in Proposition \ref{Elaine}.  It is easy to calculate the sums of squares of the autocorrelation values for $f_0$ and for $g_0$ and also to compute the sums of squares of the crosscorrelation values for $f_0$ with $\tf_0$ and for $g_0$ with $\tg_0$ to show that $\normff{f_0}=4 k(32 k^2+1)/3$, $\normff{g_0}=4 k(16 k^2+5)/3$, $\normtt{f_0 \tf_0}=4 k$, and $\normtt{g_0 \tg_0}=4 k(16 k^2-1)/3$.  Since $f_0$ and $g_0$ are Littlewood polynomials of length $4 k$, we have $\normtt{f_0}=\normtt{g_0}=4 k$, so that if $f_0,f_1,\ldots$ and $g_0,g_1,\ldots$ are the stems obtained from seeds $f_0$ and $g_0$ by recursion \eqref{Francis}, then Corollary \ref{Samuel} tells us that $\lim_{n\to \infty} \ADF(f_n)=\lim_{n\to\infty} \ADF(g_n)= (16 k^2 - 9 k+2)/(9 k)$, which is strictly increasing from a value of $1$ (when $k=1$) to $\infty$ in the limit as $k\to \infty$.
\end{remark}
\begin{proof}[Proof of Proposition \ref{Elaine}]
We shall use Corollary \ref{Samuel} to calculate the limiting crosscorrelation demerit factor.  To that end, we calculate
\begin{align*}
f_0(z) g_0(z) 
& = \frac{(1-z^{4 k})^2 (1-z)(1-z^2)}{(1-z)(1-z^4)} \\
& = \left(\frac{1-z^{4 k}}{1-z^4}\right) (1-z^2) (1-z^{4 k}) \\
& = (1-z^2+z^4-z^6+\cdots+z^{4 k-4}-z^{4 k-2}) (1-z^{4 k}),
\end{align*}
which is a polynomial with $4 k$ nonzero coefficients, every of one of which is either $1$ or $-1$, so then $\normtt{f_0 g_0}=4 k$.

And then we calculate
\begin{align*}
f_0(z) \tg_0(z)
& = \frac{(1-z^{4 k})^2 (1+z)(1-z^2)}{(1-z)(1-z^4)} \\
& = \left(\frac{1-z^{4 k}}{1-z^4}\right) (1+z)^2 (1-z^{4 k}) \\
& = (1+2 z+ z^2+\cdots+z^{4 k-4}+2 z^{4 k-3}+z^{4 k-2}) (1-z^{4 k}),
\end{align*}
which is a polynomial with $4 k$ coefficients of magnitude $1$ and $2 k$ coefficients of magnitude $2$, so then $\normtt{f_0 \tg_0}=4 k+4 \cdot 2 k= 12 k$.

And then we calculate
\begin{align*}
f_0(z) \tf_0(z) \conj{g_0(z) \tg_0(z)}
 & = \frac{(1-z^{4 k})^2 (1-z^{-4 k})^2(1-z^{-2})^2(1-z^{-1})(1+z^{-1})}{(1-z)(1+z)(1-z^{-4})^2} \\
& = \frac{(1-z^{4 k})^2 (1-z^{-4 k})^2(1-z^{-2})^2(-z^{-1})(z^{-1})}{(1-z^{-4})^2} \\
& = -\frac{(1-z^{4 k})^2 (1-z^{-4 k})^2(1-z^{-2})^2 z^2}{(1-z^{-4})^2 z^4} \\
& = -\frac{(1-z^{4 k})^2 (1-z^{-4 k})^2(1-z^2) (1-z^{-2})}{(1-z^4)(1-z^{-4})}.
\end{align*}
Thus $\int \! f_0(z) \tf_0(z) \conj{g_0(z) \tg_0(z)} = - \normtt{(1-z^{4 k})^2(1-z^2)/(1-z^4)}$, and we have already calculated the norm to be $4 k$, and so $\int f_0(z) \tf_0(z) \conj{g_0(z) \tg_0(z)}=-4 k$.

Finally, $\normtt{f_0}=\normtt{g_0}=4 k$ because $f_0$ and $g_0$ are Littlewood polynomials of length $4 k$.
Now Corollary \ref{Samuel} says that if $f_n(z)$ and $g_n(z)$ are defined recursively for all $n \in \N$ by
\begin{align*}
f_{n+1}(z) & =f_n(z)+ z^{\len f_n} f_n^\dag(-z) \\
g_{n+1}(z) & =g_n(z)+ z^{\len g_n} g_n^\dag(-z),
\end{align*}
then
\begin{align*}
\lim_{n \to \infty} \CDF(f_n,g_n) & = \frac{2 \normtt{f_0 g_0}+\normtt{f_0 \tg_0} +  \Re \int f_0 \tf_0 \conj{g_0 \tg_0}}{3 \normtt{f_0} \normtt{g_0}} \\
& = \frac{2 \cdot 4 k + 12 k + \Re(-4 k)}{3 (4 k)^2} \\
& = \frac{1}{3 k}.\qedhere
\end{align*}
\end{proof}

We close this section with the technical lemma used in the proof of Lemma \ref{Albert} above.
\begin{lemma}\label{Eustace}
If $a(z), b(z) \in \C[z,z^{-1}]$, $k \in \Z$, and $\sigma,\tau \in \{-1,1\}$, and if
\[
I=\frac{1}{2} \bnormtt{\left(a+\sigma z^k \conj{\ta}\right)\left(b+\tau z^k \conj{\tb}\right)}+ \frac{1}{2} \bnormtt{\left(\ta-\sigma z^k \conj{a}\right)\left(\tb-\tau z^k \conj{b}\right)},\]
then
\[
I= 2\normtt{a b} + 2 \normtt{a \tb} + 2 \sigma\tau \Re \int a \ta \conj{b \tb} + 2\sigma \tau \Re \int z^{-2 k} a \ta b \tb.
\]
\end{lemma}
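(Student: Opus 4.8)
The plan is to expand both squared $L^2$ norms into products of Laurent polynomials, add them, and simplify using the measure-$1$ identities $\int h+\int\conj{h}=2\Re\int h$ and the invariance of $L^2$ norms under $a\mapsto\ta$ from Lemma \ref{Leonard}\eqref{Norbert}. Writing $p_1=a+\sigma z^k\conj{\ta}$, $q_1=b+\tau z^k\conj{\tb}$, $p_2=\ta-\sigma z^k\conj{a}$, and $q_2=\tb-\tau z^k\conj{b}$, the first step is to record $2I=\int|p_1|^2|q_1|^2+\int|p_2|^2|q_2|^2$, using $\normtt{pq}=\int|p|^2|q|^2$.

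Next I would compute the four factors $|p_i|^2,|q_i|^2$ directly. Since $\conj{z^k\conj{\ta}}=z^{-k}\ta$ and $\sigma^2=1$, a short expansion gives $|p_1|^2=|a|^2+|\ta|^2+\sigma(z^{-k}a\ta+z^k\conj{a\ta})$, with $|p_2|^2$ the same but with $-\sigma$ in place of $+\sigma$; similarly $|q_1|^2,|q_2|^2$ are obtained by replacing $a,\ta,\sigma$ by $b,\tb,\tau$. The structural point is that $|p_1|^2$ and $|p_2|^2$ share the common part $X=|a|^2+|\ta|^2$ and differ only in the sign of the cross term $R=z^{-k}a\ta+z^k\conj{a\ta}$, while $|q_j|^2=Y\pm\tau S$ with $Y=|b|^2+|\tb|^2$ and $S=z^{-k}b\tb+z^k\conj{b\tb}$. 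The elementary identity $(X+\sigma R)(Y+\tau S)+(X-\sigma R)(Y-\tau S)=2XY+2\sigma\tau RS$ then collapses the two integrals to $I=\int XY+\sigma\tau\int RS$.

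It remains to evaluate $\int XY$ and $\int RS$. Expanding $XY$ gives $\int|a|^2|b|^2+\int|a|^2|\tb|^2+\int|\ta|^2|b|^2+\int|\ta|^2|\tb|^2$; since $\ta\tb=\widetilde{ab}$ and $\ta b=\widetilde{a\tb}$, Lemma \ref{Leonard}\eqref{Norbert} shows $\int|\ta|^2|\tb|^2=\normtt{ab}$ and $\int|\ta|^2|b|^2=\normtt{a\tb}$, so $\int XY=2\normtt{ab}+2\normtt{a\tb}$. Expanding $RS$ produces four terms that split into two complex-conjugate pairs on the unit circle, namely $z^{-2k}a\ta b\tb$ with $z^{2k}\conj{a\ta b\tb}$, and $a\ta\conj{b\tb}$ with $\conj{a\ta}b\tb$; applying $\int h+\int\conj{h}=2\Re\int h$ to each pair yields $\int RS=2\Re\int z^{-2k}a\ta b\tb+2\Re\int a\ta\conj{b\tb}$. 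Substituting both into $I=\int XY+\sigma\tau\int RS$ gives precisely the claimed formula.

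I do not expect a genuine obstacle: the lemma is a bookkeeping computation whose purpose is to isolate the four integrals that arise in the proof of Lemma \ref{Albert}. The only delicate points are the conjugation-and-shift arithmetic (carrying powers of $z^{\pm k}$ through $\conj{\,\cdot\,}$) and matching the expanded terms of $RS$ correctly into conjugate pairs so that the real parts appear cleanly; a sign error in either step would leave spurious terms in the final expression.
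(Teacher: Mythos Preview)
Your proof is correct and follows essentially the same route as the paper's: both expand $|p_i|^2$ and $|q_i|^2$, observe they have the form $X\pm\sigma R$ and $Y\pm\tau S$, collapse the sum via $(X+\sigma R)(Y+\tau S)+(X-\sigma R)(Y-\tau S)=2XY+2\sigma\tau RS$, and then evaluate $\int XY$ and $\int RS$ separately. The only cosmetic difference is that the paper writes $R=2\Re(z^{-k}a\ta)$ and invokes the identity $2\Re(u)\Re(v)=\Re(uv)+\Re(u\conj{v})$, whereas you keep $R=z^{-k}a\ta+z^{k}\conj{a\ta}$ and expand $RS$ directly into four terms before pairing conjugates---the computations are identical.
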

\begin{proof}
Note that
\begin{align*}
I & = \frac{1}{2} \int \left(|a|^2 + |z^k \conj{\ta}|^2  + 2\sigma \Re\left(a \conj{z}^k \ta\right) \right) \left(|b|^2 + |z^k \conj{\tb}|^2  + 2 \tau\Re\left(b \conj{z}^k \tb\right) \right)  \\
& \quad + \frac{1}{2} \int \left(|\ta|^2 + |z^k \conj{a}|^2  - 2 \sigma \Re\left(\ta \conj{z}^k a\right) \right) \left(|\tb|^2 + |z^k\conj{b}|^2  - 2 \tau \Re\left(\tb \conj{z}^k b\right) \right),
\end{align*}
and since we are integrating on the complex unit circle, we may omit terms of the form $|z^k|$ and replace $\conj{z}^k$ with $z^{-k}$ to obtain
\begin{align*}
I & = \frac{1}{2} \int \left(|a|^2 + |\ta|^2  + 2 \sigma \Re\left(z^{-k} a \ta\right) \right) \left(|b|^2 + |\tb|^2  + 2 \tau \Re\left(z^{-k} b \tb\right) \right)  \\
& \quad +  \frac{1}{2} \int \left(|\ta|^2 + |a|^2  - 2 \sigma \Re\left(z^{-k} \ta a\right) \right) \left(|\tb|^2 + |b|^2  - 2 \tau \Re\left(z^{-k} \tb b\right) \right),
\end{align*}
from which one obtains
\begin{align*}
I
& =\int \left(|a|^2 + |\ta|^2\right)  \left(|b|^2 + |\tb|^2\right)+ 4 \sigma\tau \int \Re\left(z^{-k} a \ta\right) \Re\left(z^{-k} b \tb\right) \\
& =\normtt{a b} + \normtt{\ta \tb} + \normtt{a \tb } +\normtt{\ta b} + 4 \sigma\tau \int \Re\left(z^{-k} a \ta\right) \Re\left(z^{-k} b \tb\right) \\
& =2\normtt{a b} + 2 \normtt{a \tb } +4 \sigma\tau \int \Re\left(z^{-k} a \ta\right) \Re\left(z^{-k} b \tb\right) \\
& =2\normtt{a b} + 2 \normtt{a \tb } +2\sigma\tau \int \left[\Re\left(z^{-k} a \ta z^{-k} b \tb\right)+\Re\left(z^{-k} a \ta \conj{z^{-k} b \tb}\right)\right],
\end{align*}
where the third equality uses Lemma \ref{Leonard}\eqref{Norbert} and the fourth equality uses the observation that $2 \Re(u)\Re(v)=\Re(u v)+\Re(u\conj{v})$.  The desired result now readily follows.
\end{proof}

\section{Symmetry Groups}\label{Simon}

The expressions in Theorem \ref{Vivian} for the limiting autocorrelation and crosscorrelation demerit factors are invariant under certain symmetries.  This helps abbreviate computational searches for sequences and sequence pairs with optimum performance.
These symmetries are based on negation of polynomials, replacement of $z$ by $-z$ in polynomials, and transformation of polynomials to their conjugate reciprocals.
One should recall the notational conventions $\tf(z)$ and $\conj{f(z)}$ for $f(z) \in \C[z,z^{-1}]$ and the definition of the conjugate reciprocal $f^\dag(z)$ for $f(z) \in \C[z]$ from the second paragraph of Section \ref{Lawrence}.
One should note that $\widetilde{f g}(z) = \tf(z) \tg(z)$ and $\conj{f(z) g(z)}=\conj{f(z)} \cdot \conj{g(z)}$ for every $f(z), g(z) \in \C[z,z^{-1}]$ and $(f(z)g(z))^\dag=f^\dag(z) g^\dag(z)$ for every $f(z), g(z) \in \C[z]$.  (The third relation follows easily from Lemma \ref{Alice}\eqref{William} and the second relation.)  We shall also need the following observation.
\begin{lemma}\label{Orestes} For any $f(z) \in \C[z]$ and any $p \in \R$ with $p \geq 1$, we have $\norm{f^\dag(z)}{p}=\norm{f(z)}{p}$.
\end{lemma}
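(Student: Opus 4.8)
The plan is to reduce the claim to Lemma~\ref{Alice}\eqref{William}, which expresses $f^\dag(z)$ as $z^{\deg f}\conj{f(z)}$, an identity of Laurent polynomials. Since the $L^p$ norm of an element of $\C[z,z^{-1}]$ depends only on the absolute values it takes on the complex unit circle, it suffices to show that $|f^\dag(e^{i\theta})|=|f(e^{i\theta})|$ for every $\theta$; raising to the $p$-th power, integrating over $[0,2\pi]$, and taking $p$-th roots then gives the result.

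First I would observe that for $z=e^{i\theta}$ we have $z^{-j}=\conj{z^j}$, so the Laurent polynomial $\conj{f(z)}=\sum_j \conj{f_j}z^{-j}$ takes the value $\conj{f(e^{i\theta})}$ at $e^{i\theta}$; hence $|\conj{f(e^{i\theta})}|=|f(e^{i\theta})|$. Next, $|z^{\deg f}|=1$ on the unit circle, so combining this with Lemma~\ref{Alice}\eqref{William} yields
\[
|f^\dag(e^{i\theta})| = |z^{\deg f}|\cdot|\conj{f(e^{i\theta})}| = |f(e^{i\theta})|
\]
for every $\theta$, and the conclusion $\norm{f^\dag(z)}{p}=\norm{f(z)}{p}$ follows by integration. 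There is essentially no obstacle here; the only point that requires care is that the formula of Lemma~\ref{Alice}\eqref{William} is an equality of Laurent polynomials, which is precisely what licenses the pointwise comparison of magnitudes on the circle, and that multiplication by the unimodular factor $z^{\deg f}$ leaves those magnitudes unchanged.
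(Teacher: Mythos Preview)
Your argument is correct and follows essentially the same route as the paper's proof: both invoke Lemma~\ref{Alice}\eqref{William} to write $f^\dag(z)=z^{\deg f}\conj{f(z)}$, observe that on the unit circle this has the same modulus as $f(z)$, and then integrate. The paper simply compresses the pointwise step into the integral computation, while you spell it out explicitly.
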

\begin{proof}
By Lemma \ref{Alice}\eqref{William}, we have
\begin{align*}
\frac{1}{2\pi} \int_0^{2\pi} |f^\dag(e^{i\theta})|^p d\theta 
& = \frac{1}{2\pi} \int_0^{2\pi} \left|(e^{i\theta})^{\deg f} \conj{f(e^{i\theta})}\right|^p d\theta \\
& = \frac{1}{2\pi} \int_0^{2\pi} |f(e^{i\theta})|^p d\theta. \qedhere
\end{align*}
\end{proof}

Now we introduce a group of symmetries that, when applied to a polynomial $f_0(z) \in \C[z]$ with nonzero constant coefficient, will preserve the autocorrelation properties of the stem $f_0,f_1,\ldots$ of Rudin-Shapiro-like polynomials obtained from seed $f_0$ via recursion \eqref{Francis}.
First we describe our group and how it affects certain norms, and then we show its effect on autocorrelation as a corollary.
\begin{proposition}\label{Penelope}
Let $\ell$ be a nonnegative integer, and let $\apols$ be the set of all polynomials of length $\ell$ in $\C[z]$ that have nonzero constant coefficient.
We define three maps from $\apols$ to itself:
\begin{align*}
n(f) & = -f \\
h(f) & = \tf \\
r(f) & = f^\dag.
\end{align*}
These maps generate a group $\agr=\ggen{n,h,r}$ of permutations of $\apols$.
\begin{enumerate}[(i).]
\item If $\ell=1$, then $G_1$ is the internal direct product of the two cyclic groups $\ggen{n}$ and $\ggen{r}$, each of order $2$.
\item If $\ell$ is odd with $\ell > 1$, then $\agr$ is the internal direct product of the three cyclic groups $\ggen{n}$, $\ggen{h}$, and $\ggen{r}$, each of order $2$.
\item If $\ell$ is even, then $\agr$ is isomorphic to the dihedral group of order $8$ generated by $r h$ and $h$, where $r h$ is of order $4$, $h$ is of order $2$, and $h(r h) h^{-1}=(r h)^{-1}$.
\end{enumerate}
For any $t \in \agr$, any $f \in \apols$, and any $p \geq 1$, we have
\begin{align*}
\norm{t(f)}{p} & =\norm{f}{p} \\
\norm{t(f)\widetilde{t(f)}}{p} & =\norm{f\tf}{p}.
\end{align*}
\end{proposition}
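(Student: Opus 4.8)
The plan is to prove the two parts of the proposition separately: first the structural description of $\agr$, then the two $L^p$-norm invariances, which are the easier part. For the structure I would begin by recording the relations among the generators. Since $h(h(f))=\widetilde{\tf}=f$ and $n(n(f))=f$, the maps $h$ and $n$ are involutions; using Lemma~\ref{Alice}\eqref{William} and \eqref{Alexander} together with the hypothesis that members of $\apols$ have nonzero constant coefficient (so that $\deg f^\dag=\deg f$), one gets $r(r(f))=(f^\dag)^\dag=f$, so $r$ is an involution too, and in particular $n,h,r$ really are permutations of $\apols$. Inspecting coefficients gives $nh=hn$ and $nr=rn$, and the only delicate relation is the one between $h$ and $r$: Lemma~\ref{Alice}\eqref{Annie} reads $\widetilde{(f^\dag)}=(-1)^{\deg f}(\tf)^\dag$, i.e.\ $(h\circ r)(f)=(-1)^{\ell-1}(r\circ h)(f)$ for every $f\in\apols$. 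So the structure of $\agr$ is dictated by the parity of $\ell$, with the further degeneracy $h=\mathrm{id}$ when $\ell=1$ (constants satisfy $\tf=f$). If $\ell$ is odd then $h$ and $r$ commute, so $\agr$ is a quotient of $(\Z/2)^3$ (of $(\Z/2)^2$ when $\ell=1$, where $h$ drops out); if $\ell$ is even then $h\circ r=n\circ r\circ h$, whence a short computation using $nr=rn$ and $h^2=r^2=\mathrm{id}$ gives $(rh)^2=n$ and $h(rh)h^{-1}=(rh)^{-1}$, so $\agr$ is a quotient of the dihedral group $D_4$ of order $8$ via $\rho\mapsto rh$, $\tau\mapsto h$.

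To finish the group description I would show that each of these surjections has trivial kernel. For $\ell$ even this is painless: $D_4$ is a $2$-group, so every nontrivial normal subgroup contains the central element $\rho^2$, whose image is $n$; and $n$ is not the identity permutation, since it moves, say, $1+z+\cdots+z^{\ell-1}$, so the kernel is trivial and $\agr\cong D_4$, with $rh$ of order $4$, $h$ of order $2$, and $h(rh)h^{-1}=(rh)^{-1}$ exactly as asserted. For $\ell$ odd with $\ell>1$ (and similarly, more easily, $\ell=1$) one must instead rule out every nonzero element of $(\Z/2)^3$ from the kernel: writing a general element as $n^ah^br^c$, one checks that it sends the $j$-th coefficient $f_j$ of $f$ to $(-1)^{a+bj}f_j$ when $c=0$ and to $(-1)^{a+bj}\conj{f_{\ell-1-j}}$ when $c=1$, and then the seed $f=i+z^{\ell-1}$ rules out all elements with $c=1$ (its leading coefficient cannot be recovered after conjugation) while $f=1+z+z^{\ell-1}$ forces $a=b=0$. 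This gives $\card{\agr}=8$ (resp.\ $4$), hence the internal direct product decompositions, the factors $\ggen{n},\ggen{h},\ggen{r}$ each having order $2$ because none of $n,h,r$ is the identity.

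For the norm invariances it is enough to see that each generator preserves both $\norm{f}{p}$ and $\norm{f\tf}{p}$, and then to induct on the length of a word representing $t\in\agr$. Now $\norm{-f}{p}=\norm{f}{p}$ trivially, $\norm{\tf}{p}=\norm{f}{p}$ by Lemma~\ref{Leonard}\eqref{Norbert}, and $\norm{f^\dag}{p}=\norm{f}{p}$ by Lemma~\ref{Orestes}. Writing $\Phi(f)=f\tf$, one computes $\Phi(n(f))=(-f)(-\tf)=\Phi(f)$ and $\Phi(h(f))=\tf\cdot\widetilde{\tf}=\tf\cdot f=\Phi(f)$ directly, while Lemma~\ref{Alice}\eqref{Annie} and the product rule $(ab)^\dag=a^\dag b^\dag$ give $\Phi(r(f))=f^\dag\cdot\widetilde{(f^\dag)}=(-1)^{\ell-1}(f\tf)^\dag$, so that $\norm{\Phi(r(f))}{p}=\norm{(f\tf)^\dag}{p}=\norm{\Phi(f)}{p}$ by Lemma~\ref{Orestes} again. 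Hence $\norm{\Phi(s(g))}{p}=\norm{\Phi(g)}{p}$ for every generator $s$ and every $g\in\apols$, and iterating this along a word $t=s_1\circ\cdots\circ s_k$ yields $\norm{t(f)\widetilde{t(f)}}{p}=\norm{f\tf}{p}$, completing the proof.

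The step I expect to be the main obstacle is the non-collapse of $\agr$ in the abelian cases: $(\Z/2)^3$ has seven subgroups of order $2$, so unlike the $D_4$ case one cannot eliminate the kernel with a single central witness and must instead exhibit explicit polynomials in $\apols$ that separate all the group elements — the seeds $i+z^{\ell-1}$ and $1+z+z^{\ell-1}$ are chosen precisely for this purpose. By contrast, once the action of $f\mapsto f\tf$ under each generator has been recorded, the two $L^p$-norm identities are routine bookkeeping.
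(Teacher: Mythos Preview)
Your argument is correct and follows the same overall architecture as the paper's: establish that $n,h,r$ are involutions, record the commutation relations (in particular $h r = (-1)^{\ell-1} r h$ via Lemma~\ref{Alice}\eqref{Annie}), identify the abstract group of which $\agr$ is a quotient, and then verify the two norm invariances on generators using Lemmata~\ref{Leonard}\eqref{Norbert} and~\ref{Orestes}.

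The one place where your route differs from the paper's is in proving that the quotient map does not collapse. The paper simply exhibits, for each parity of $\ell$, a single polynomial in $\apols$ whose orbit under $\agr$ already has $4$ or $8$ elements (e.g.\ $e^{\pi i/6}+e^{\pi i/3} z^{\ell-1}$ for $\ell$ even), forcing $\card{\agr}\geq 8$. You instead argue kernel-triviality directly: for $\ell$ even you invoke the fact that every nontrivial normal subgroup of a $2$-group meets the center, so it suffices to check that the central element $\rho^2\mapsto n$ is not the identity; for $\ell$ odd you write down the explicit action $f_j\mapsto (-1)^{a+bj} f_j$ or $(-1)^{a+bj}\conj{f_{\ell-1-j}}$ and separate the group elements with the test polynomials $i+z^{\ell-1}$ and $1+z+z^{\ell-1}$. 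Your even-case argument is a bit slicker (one test instead of a full orbit computation), while the paper's single-orbit approach is more uniform across parities; both are short and either would serve.
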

\begin{proof}
It is clear that each of $n$, $h$, and $r$ is an involution on $\apols$ (except that $h$ is the identity element when $\ell=1$), so these maps generate a group of permutations of $\apols$.
If $\ell=1$, then it is not hard to show that $e^{\pi i/4} \in P_1$ has four distinct images under $\agr=\ggen{n,h,r}$, so $\agr$ has order at least $4$.
If $\ell >1$ and $\ell$ is odd, then it is not hard to show that $1+z+i z^{\ell-1} \in \apols$ has eight distinct images under $\agr=\ggen{n,h,r}$.
If $\ell$ is even, then it is not hard to show that $e^{\pi i/6} + e^{\pi i/3} z^{\ell-1} \in \apols$ has eight distinct images under $\agr=\ggen{n,h,r}$.
So $\agr$ has order at least $8$.
Furthermore $n$ commutes with both $h$ and $r$, and Lemma \ref{Alice}\eqref{Annie} shows that $h r=r h$ when $\ell$ is odd, but $h r = n r h$ when $\ell$ is even.

Thus if $\ell=1$, then $G_1=\ggen{n,h,r}=\ggen{n,r}$ is a group of order at least $4$, generated by commuting involutions $n$ and $r$.  So $G_1$ is the internal direct product of $\ggen{n}$ and $\ggen{r}$, which are both cyclic groups of order $2$.

If $\ell$ is odd and greater than $1$, our group $\agr=\ggen{n,h,r}$ is a group or order at least $8$ generated by commuting involutions $n$, $h$, and $r$.
So $\agr$ is the internal direct product of $\ggen{n}$, $\ggen{h}$, and $\ggen{r}$, which are three cyclic groups each of order $2$.

On the other hand, if $\ell$ is even, then $r h$ can be shown to have $(r h)^2=n$, $(r h)^3=n r h = h r$, and $(r h)^4$ the identity, and since the powers of $r h$ take the element $1+i z^{\ell-1} \in \apols$ to four distinct elements, we see that $r h$ has order $4$.
Thus $\agr=\ggen{n,h,r}=\ggen{h,r}=\ggen{h,r h}$.  Then note that $h(r h) h^{-1}= h r = (r h)^3=(r h)^{-1}$, and so it can be seen that $\agr=\ggen{h, r h}$ is generated by an element $h$ or order $2$ and an element $y= r h$ of order $4$ that satisfy the relation $h y h^{-1} = y^{-1}$.
These are the relations satisfied by the generators of the dihedral group $D$ of order $8$, that is, the group of symmetries of a square (with a $90^\circ$ rotation corresponding to $y$ and a flip corresponding to $h$).
So $\agr$ is a homomorphic image of $D$, but since $\agr$ has order at least $8$, we must have $\agr \cong D$.

To verify that $\norm{t(f)}{p}=\norm{f}{p}$ for any $f \in \apols$, $t \in \agr$, and $p \geq 1$, it suffices to check that it holds when $t$ is one of the generators $n$, $h$, and $r$.  When $t=n$, this is clear, and when $t=h$ or $r$, it is a consequence, respectively, of Lemma \ref{Leonard}\eqref{Norbert} or Lemma \ref{Orestes}.

Similarly, to verify that $\norm{t(f) \widetilde{t(f)}}{p}=\norm{f\tf}{p}$ for any $f \in \apols$, $t \in \agr$, and $p \geq 1$, it suffices to check that it holds when $t$ is one of the generators $n$, $h$, and $r$.  When $t=n$ or $h$, this is clear, and when $t=r$, then
\begin{align*}
\norm{r(f) \widetilde{r(f)}}{p}
& = \norm{f_0^\dag \widetilde{(f_0^\dag)}}{p} \\
& = \norm{f_0^\dag (\widetilde{f_0})^\dag}{p} \text{ or } \norm{-f_0^\dag (\widetilde{f_0})^\dag}{p} \\
& = \norm{(f_0 \widetilde{f_0})^\dag}{p} \\
& = \norm{f_0 \widetilde{f_0}}{p},
\end{align*}
where the second equality uses Lemma \ref{Alice}\eqref{Annie}, and the fourth equality uses Lemma \ref{Orestes}.
\end{proof}
\begin{corollary}\label{Wilbur}
Let $f_0 \in \C[z]$ be a polynomial of length $\ell$ with nonzero constant coefficient, let $t$ be an element of the group $\agr$ described in Proposition \ref{Penelope}, and let $a_0=t(f_0)$.
If $f_0,f_1,\ldots$ and $a_0,a_1,\ldots$ are sequences of Rudin-Shapiro-like polynomials generated from $f_0$ and $a_0$ via recursion \eqref{Francis}, then
\[
\lim_{n \to \infty} \ADF(a_n) = \lim_{n \to \infty} \ADF(f_n).
\]
\end{corollary}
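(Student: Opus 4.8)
The plan is to reduce the statement to the closed-form expression for the limiting autocorrelation demerit factor that has already been established, and then invoke the norm-invariance properties of the group $\agr$. Specializing Corollary \ref{Samuel} (equivalently Theorem \ref{Sally}) to the autocorrelation case $g_0=f_0$ gives
\[
\lim_{n \to \infty} \ADF(f_n) = -1 + \frac{2}{3} \cdot \frac{\normff{f_0} + \normtt{f_0 \tf_0}}{\normtf{f_0}},
\]
and this formula is valid for any seed lying in $\apols$, regardless of which sign sequence is used in recursion \eqref{Francis}. Since every element of $\agr$ carries $\apols$ to itself, the polynomial $a_0=t(f_0)$ again has a nonzero constant coefficient, so the stem $a_0,a_1,\ldots$ is legitimate and the same formula applies with $f_0$ replaced by $a_0$. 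Thus it suffices to check that the three quantities $\normff{f_0}$, $\normtt{f_0 \tf_0}$, and $\normtf{f_0}$ are unchanged when $f_0$ is replaced by $t(f_0)$.

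First I would apply the first norm identity of Proposition \ref{Penelope} with $p=4$ to obtain $\normff{t(f_0)} = \norm{t(f_0)}{4}^4 = \norm{f_0}{4}^4 = \normff{f_0}$, and with $p=2$ to obtain $\normtf{t(f_0)} = \norm{t(f_0)}{2}^4 = \norm{f_0}{2}^4 = \normtf{f_0}$. Next I would apply the second norm identity of Proposition \ref{Penelope}, namely $\norm{t(f)\widetilde{t(f)}}{p}=\norm{f\tf}{p}$, with $p=2$ to obtain $\normtt{t(f_0)\widetilde{t(f_0)}} = \norm{t(f_0)\widetilde{t(f_0)}}{2}^2 = \norm{f_0 \tf_0}{2}^2 = \normtt{f_0 \tf_0}$. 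Substituting these equalities into the displayed formula for the limiting autocorrelation demerit factor of the $a_n$ then yields $\lim_{n\to\infty}\ADF(a_n) = \lim_{n\to\infty}\ADF(f_n)$, using also that $\normtf{f_0}=\normtf{a_0}\neq 0$ because $f_0$ and $a_0$ have nonzero constant coefficients and hence are not the zero polynomial.

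There is essentially no real obstacle here: Proposition \ref{Penelope} was stated precisely so as to package the two norm-invariances that the formula in Corollary \ref{Samuel} requires. The only points needing care are purely bookkeeping — confirming that $t$ preserves membership in $\apols$ so that the recursion producing $a_0,a_1,\ldots$ makes sense, and matching the exponents $p\in\{2,4\}$ in Proposition \ref{Penelope} to the $L^2$ and $L^4$ norms appearing in the demerit-factor formula. One could alternatively bypass Corollary \ref{Samuel} and argue directly from the recursions of Lemma \ref{Albert}, but since the limiting value is already available in closed form, the substitution argument above is the cleanest route.
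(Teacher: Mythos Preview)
Your proposal is correct and follows essentially the same approach as the paper's proof: invoke the closed-form limit for $\ADF$ from Theorem~\ref{Vivian} (equivalently Corollary~\ref{Samuel}) and then apply the two norm-invariance identities of Proposition~\ref{Penelope} to conclude that the three norms in that formula are unchanged under $f_0\mapsto t(f_0)$. Your write-up is a bit more explicit about matching exponents and checking $a_0\in\apols$, but the argument is the same.
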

\begin{proof}
By Theorem \ref{Vivian}, we have
\[
\lim_{n \to \infty} \ADF(f_n) = -1 + \frac{2}{3} \cdot \frac{\normff{f_0}+ \normtt{f_0 \tf_0}}{\normtf{f_0}},
\]
but Proposition \ref{Penelope} shows that the values of the three norms occurring on the right hand side do not change if we replace every instance of $f_0$ with $t(f_0)=a_0$, which changes the right hand side to $\lim_{n \to \infty} \ADF(a_n)$ by Theorem  \ref{Vivian}.
\end{proof}
Now we introduce a group of symmetries that, when applied to a pair of polynomials $(f_0(z),g_0(z))$ from $\C[z]$, will preserve the crosscorrelation properties of the stems $f_0,f_1,\ldots$ and $g_0,g_1,\ldots$ of Rudin-Shapiro-like polynomials obtained from seeds $f_0$ and $g_0$ via recursion \eqref{Francis}.  First we describe the group and how it affects certain norms and integrals, and then we show its effect on crosscorrelation as a corollary.
\begin{proposition}\label{Priscilla}
Let $\ell$ be a nonnegative integer, and let $\apols$ be the set of all polynomials of length $\ell$ in $\C[z]$ that have nonzero constant coefficient.
We define four maps from $\cpols$ to itself:
\begin{align*}
s(f,g) & = (g,f) \\
n(f,g) & = (-f,g) \\
h(f,g) & = (\tf,\tg) \\
r(f,g) & = (f^\dag,g^\dag).
\end{align*}
These maps generate a group $\cgr=\ggen{s,n,h,r}$ of permutations of $\cpols$.  $\cgr$ contains a dihedral subgroup $D$ of order $8$ generated $n s$ and $s$, where $n s$ has order $4$, $s$ has order $2$, and $s(n s)s^{-1}=(n s)^{-1}$.
\begin{enumerate}[(i).]
\item If $\ell=1$, then $\cgr$ is the internal direct product of the dihedral group $D$ of order $8$ and the cyclic group $\ggen{r}$ of order $2$.
\item If $\ell$ is odd and $\ell > 1$, $\cgr$ is the internal direct product of the dihedral group $D$ of order $8$, the cyclic subgroup $\ggen{h}$ of order $2$, and the cyclic subgroup $\ggen{r}$ of order $2$.
\item If $\ell$ is even, then $\cgr$ is the internal central product of $D$ and another dihedral subgroup $\Delta$ of order $8$ generated by $r h$ and $h$, where $r h$ has order $4$, $h$ has order $2$, and $h(r h) h^{-1}=(r h)^{-1}$.  Thus $\cgr$ is isomorphic to the extraspecial group of order $2^5$ of $+$ type, which is also the inner holomorph of the dihedral group of order $8$.  
\end{enumerate}
For any $t \in \cgr$, any $f,g \in \apols$, and any $p \geq 1$, let $(a,b)=t(f,g)$, and then we have
\begin{align*}
\norm{a}{p} \norm{b}{p} & =\norm{f}{p} \norm{g}{p}\\
\norm{a b}{p} & = \norm{f g}{p} \\
\norm{a\widetilde{b}}{p} & =\norm{f\tg}{p} \\
\Re \int a \widetilde{a} \conj{b \widetilde{b}} & = \Re \int f \tf \, \conj{g \tg}.
\end{align*}
\end{proposition}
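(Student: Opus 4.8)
The argument naturally splits into two essentially independent pieces: the metric identities, and the group-theoretic structure. For the four displayed identities, note that each of the quantities $\norm{a}{p}\norm{b}{p}$, $\norm{ab}{p}$, $\norm{a\tb}{p}$, and $\Re\int a\ta\conj{b\tb}$ is preserved under composition of transformations as soon as it is preserved by each of $s$, $n$, $h$, $r$, so it suffices to check the identities when $t$ is one of these four generators. For $t=s$ and $t=n$ everything is immediate, except that $\norm{a\tb}{p}$ under $s$ needs Lemma \ref{Leonard}\eqref{Norbert} in the form $\norm{g\tf}{p}=\norm{\widetilde{g\tf}}{p}=\norm{f\tg}{p}$, and $\Re\int a\ta\conj{b\tb}$ under $s$ needs only that the real part of an integral is unchanged when the integrand is replaced by its complex conjugate. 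For $t=h$ one invokes Lemma \ref{Leonard} together with $\widetilde{fg}=\tf\tg$ and $\widetilde{\tf}=f$. For $t=r$ one uses Lemma \ref{Orestes}, the relation $(fg)^\dag=f^\dag g^\dag$, and Lemma \ref{Alice}: part \eqref{Annie} converts $\widetilde{f^\dag}$ into $(-1)^{\ell-1}(\tf)^\dag$ (two copies of the sign cancel in the integral), and parts \eqref{William} and \eqref{Alexander} make the factors $z^{\pm(2\ell-2)}$ telescope in $\Re\int f^\dag\widetilde{f^\dag}\conj{g^\dag\widetilde{g^\dag}}$, reducing it to $\Re\int g\tg\conj{f\tf}=\Re\int f\tf\conj{g\tg}$. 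These are all short direct computations.

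For the group structure, the first step is to tabulate the action of $s$, $n$, $h$, $r$ and of the composites $ns$ and $rh$ on a general pair $(f,g)$, and read off the relations: each of $s$, $n$, $r$ is an involution, while $h$ is an involution when $\ell>1$ and is the identity when $\ell=1$; the element $ns$ has order $4$ with $(ns)^2=m$, where $m(f,g)=(-f,-g)$; $s(ns)s^{-1}=(ns)^{-1}$; $m$ commutes with all four generators, hence is central in $\cgr$; each of $n$ and $s$ commutes with both $h$ and $r$; and, crucially, $hr=rh$ when $\ell$ is odd whereas $hr=m\cdot rh$ when $\ell$ is even, this dichotomy being forced by the sign $(-1)^{\deg f}=(-1)^{\ell-1}$ in Lemma \ref{Alice}\eqref{Annie}. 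Since $s(ns)s^{-1}=(ns)^{-1}$ and $s$ does not lie in the cyclic group $\ggen{ns}$ (immediate from the action on a suitable pair), the subgroup $D=\ggen{ns,s}=\ggen{n,s}$ is dihedral of order exactly $8$, which proves the assertion preceding the enumerated cases.

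It then remains to assemble $\cgr$ and to certify its order, which I would do by exhibiting test pairs with the maximal number of distinct images, in the style of the proof of Proposition \ref{Penelope}. If $\ell=1$, then $h$ is the identity, $r$ is central, and $r\notin D$ because no element of $D$ acts by complex conjugation on the coefficients, so $\cgr=D\times\ggen{r}$ has order $16$. If $\ell$ is odd and $\ell>1$, then $h$ and $r$ are central and commute, $\ggen{h}\cap\ggen{r}=\{e\}$, and every nontrivial element of $\ggen{h,r}$ visibly involves the operation $a\mapsto\ta$ or the operation $a\mapsto a^\dag$ whereas no element of $D$ does, so $\ggen{h,r}\cap D=\{e\}$ and $\cgr=D\times\ggen{h}\times\ggen{r}$ has order $32$. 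If $\ell$ is even, then one computes $(rh)^2=m$ and $h(rh)h^{-1}=hr=(rh)^{-1}$, so $\Delta=\ggen{rh,h}$ is dihedral of order $8$; one checks that $D$ and $\Delta$ commute elementwise, that $D\cap\Delta=\ggen{m}$, and that $\ggen{D,\Delta}=\cgr$ (because $r=(rh)h\in\ggen{D,\Delta}$), so $\cgr$ is the central product of the two dihedral groups $D$ and $\Delta$ amalgamated along their common central subgroup $\ggen{m}$, of order $32$. Finally, the central product of two dihedral groups of order $8$ is the extraspecial group of order $2^5$ of $+$ type, and this group is classically known to be the inner holomorph of the dihedral group of order $8$, which completes the identification.

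The step I expect to be most delicate is the even case: verifying $(rh)^2=m$ and $h(rh)h^{-1}=(rh)^{-1}$ requires applying Lemma \ref{Alice}\eqref{Annie} twice and tracking the sign $(-1)^{\ell-1}$ with care, and one must then be sure to identify the resulting central product as the $+$-type (rather than the $-$-type) extraspecial group, the key point being that both amalgamated factors are dihedral, not quaternion. Selecting the test pairs that certify the orders $16$ and $32$ is the other place requiring a little ingenuity, though it simply adapts the template already used for Proposition \ref{Penelope}.
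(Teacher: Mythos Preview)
Your proposal is correct and follows essentially the same approach as the paper: the paper likewise verifies the four invariance identities generator by generator using Lemmas \ref{Leonard}, \ref{Orestes}, and \ref{Alice}, and analyzes the group structure by first isolating the dihedral subgroup $\ggen{s,n}$, then the subgroup $\ggen{h,r}$ (trivializing, abelianizing, or becoming dihedral according as $\ell=1$, $\ell>1$ odd, or $\ell$ even via the sign in Lemma \ref{Alice}\eqref{Annie}), and finally certifying the order of $\cgr$ with explicit test pairs. The only cosmetic difference is that where you argue intersections such as $D\cap\ggen{r}=\{e\}$ or $D\cap\Delta=\ggen{m}$ conceptually (``no element of $D$ acts by complex conjugation''), the paper does this purely by exhibiting a single pair with the required number of distinct images; since you already plan to produce such test pairs, the two arguments converge.
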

\begin{proof}
It is clear that each of $s$, $n$, $h$, and $r$ is an involution on $\cpols$ (except that $h$ is the identity element when $\ell=1$), so these maps generate a group of permutations of $\cpols$.
Furthermore, $s$ commutes with $h$ and $r$, and $n$ also commutes with $h$ and $r$.
Thus we can better understand our group $\cgr$ by focusing on two subgroups, $\ggen{s,n}$ and $\ggen{h,r}$, with the knowledge that every element from the former subgroup commutes with every element of the latter subgroup.

Let us first focus on the subgroup $\ggen{s,n}$ of $\cgr$.
We note that $(s n)(f,g)=(g,-f)$ but $(n s)(f,g)=(-g,f)$, so that $s$ and $n$ do not commute.  We define $x=n s$, and then $\ggen{s,n}=\ggen{s, n s}=\ggen{s,x}$.  We note that $x$ is an element of order $4$ with $x^2(f,g)=(-f,-g)=-(f,g)$ and $x^3(f,g)=(g,-f)=(s n)(f,g)$.
Then note that $s x s^{-1}=s n s s^{-1}=s n=x^3=x^{-1}$.
Thus $\ggen{s,n}=\ggen{s,x}$ must be a homomorphic image of a dihedral group of order $8$, the group of symmetries of a square (with a $90^\circ$ rotation corresponding to $x$ and a flip corresponding to $s$).
And in fact, one can show that the $(1+z^{\ell-1},i+i z^{\ell-1}) \in \cpols$ has eight distinct images under the action of $\ggen{s,n}$, so $\ggen{s,x} \cong D$.

Now let us focus on the subgroup $\ggen{h,r}$ of $\cgr$.
If $\ell=1$, then $h$ is the identity element, so $\ggen{h,r}=\ggen{r}$ is a cyclic group of order $2$.
Since the elements of $\ggen{s,n}$ commute with the elements of $\ggen{h,r}$, this means that $G_{1,1}$ is a homomorphic image of a direct product of a dihedral group $D$ of order $8$ and a cyclic group $C$ of order $2$.
If $\ell=1$, then it is not hard to show that $(1,e^{\pi i/4}) \in P_1 \times P_1$ that has $16$ distinct images under $G_{1,1}=\ggen{n,h,r}$, so $G_{1,1} \cong D \times C$.

Now suppose that $\ell >1$.
Lemma \ref{Alice}\eqref{Annie} shows that $h r=r h$ when $\ell$ is odd, but that $(h r)(f,g) =-(r h)(f,g)$ when $\ell$ is even.
Note that the group element that maps $(f,g)$ to $(-f,-g)$ is $x^2$, described in the previous paragraph.
So $h$ and $r$ commute when $\ell$ is odd, but $h r = x^2 r h$ when $\ell$ is even.

So if $\ell>1$ and $\ell$ is odd, then $\ggen{h,r}$ is a homomorphic image of a Klein four-group, that is, of $C \times C$ with $C$ a cyclic group of order $2$.
And in fact one can show that $(1+z-z^{\ell-1},1+z-z^{\ell-1})$ has four distinct images under the action of $\ggen{h,r}$, so $\ggen{h,r}\cong C\times C$.

On the other hand, if $\ell$ is even, then $r h$ can be shown to have $(r h)^2=x^2$, $(r h)^3=x^2 r h = h r$, and $(r h)^4$ the identity, and since the powers of $r h$ take the element $(1+i z^{\ell-1},1+i z^{\ell-1}) \in \cpols$ to four distinct elements, we see that $r h$ has order $4$.
Note that $h(r h) h^{-1}= h r = (r h)^3=(r h)^{-1}$, and so it can be seen that $\ggen{h,r}=\ggen{h, r h}$ is generated by an element $h$ or order $2$ and an element $y=r h$ of order $4$ that satisfy the relation $h y h^{-1} = y^{-1}$.  Thus if $\ell$ is even, then $\ggen{r,h}$ is a homomorphic image of the dihedral group $D$ of order $8$, the group of symmetries of a square (with a $90^\circ$ rotation corresponding to $y$ and a flip corresponding to $h$).
One can show that $(1+i z^{\ell-1},1+i z^{\ell-1})$ has eight distinct images under the action of $\ggen{h,r}$, so that $\ggen{h,r}\cong D$.

Now we assemble what we have learned about the subgroups $\ggen{s,n}$ and $\ggen{h,r}$ of $\cgr$ using the fact that every element in the former subgroup commutes with every element in the latter.  We again separate into cases depending on the parity of $\ell$.

If $\ell$ is odd and greater than $1$, we saw that $\ggen{h,r}$ is a Klein four-group generated by the elements $h$ and $r$, each of order $2$.
We saw that $\ggen{s,n}$ is a dihedral group of order $8$.
So then the group $\cgr=\ggen{s,n,h,r}$ is a homomorphic image of $D \times C \times C$, where $D$ is the dihedral group of order $8$ and $C$ is the cyclic group of order $2$.
It is not hard to show that $(e^{\pi i/6} + e^{\pi i/3} z^{\ell-1},1+z+z^{\ell-1}) \in \cpols$ has $32$ distinct images under $\cgr=\ggen{n,h,r}$, so $\cgr$ has order at least $32$, and so we must have $\cgr \cong D \times C \times C$.

On the other hand, if $\ell$ is even, then $\ggen{h,r}$ is a dihedral group of order $8$ generated by element $y=r h$ of order $4$ and element $h$ of order $2$.
We saw that $\ggen{s,n}$ is a dihedral group of order $8$ generated by element $x=n s$ of order $4$ and element $s$ of order $2$.  The group $\ggen{s,n}=\ggen{x,s}$ has a center $\ggen{x^2}$ of order $2$.
The group $\ggen{h,r}=\ggen{y,h}$ has a center $\ggen{y^2}$, and we observed above that $y^2=(r h)^2=x^2$, so the centers of $\ggen{n,s}$ and $\ggen{h,r}$ completely overlap with each other.  So the group $\cgr=\ggen{s,n,h,r}$ is a homomorphic image of the central product of two dihedral groups of order $8$, which makes $\cgr$ a homomorphic image of the extraspecial group of order $2^5$ of $+$ type.
It is not hard to show that $(e^{\pi i/6} + e^{\pi i/3} z^{\ell-1},1+z^{\ell-1}) \in \cpols$ has $32$ distinct images under $\cgr=\ggen{n,h,r}$, so $\cgr$ has order at least $32$.
So $\cgr$ must be isomorphic to the the extraspecial group of order $2^5$ of $+$ type, which is also the inner holomorph of the dihedral group of order $8$.

Now we verify the four invariance relations for $\cgr$ in the statement of this proposition.  It suffices to check these relations when the group element $t \in \cgr$ is one of the four generators $s$, $n$, $h$, or $r$ of the group.  For the rest of this proof, we let $(f,g)$ be an arbitrary pair in $\cpols$, let $p$ be a real number with $p \geq 1$, let $t$ be an element of $\{s,n,h,r\}$, and we set $(a,b)=t(f,g)$.

It is clear that $\norm{a}{p} \norm{b}{p} =\norm{f}{p} \norm{g}{p}$ when $t=s$ or $n$, and when $t=h$ or $r$, this is a consequence, respectively, of Lemma \ref{Leonard}\eqref{Norbert} or Lemma \ref{Orestes}.

It is clear that $\norm{a b}{p} = \norm{f g}{p}$ when $t=s$ or $n$, and when $t=h$ or $r$, then this is a consequence, respectively, of Lemma \ref{Leonard}\eqref{Norbert} or Lemma \ref{Orestes}.

Now we verify that $\norm{a \tb}{p}=\norm{f\tg}{p}$.  This is clear when $t=n$, and when $t=s$ or $h$, then $\norm{a \tb}{p}=\norm{g \tf}{p}=\norm{\widetilde{f \tg}}{p}$, which equals $\norm{f\tg}{p}$ by Lemma \ref{Leonard}\eqref{Norbert}.  When $t=r$, then
\begin{align*}
\norm{a \widetilde{b}}{p}
& = \norm{f^\dag \widetilde{(g^\dag)}}{p} \\
& = \norm{f^\dag (\tg)^\dag}{p} \text{ or } \norm{-f^\dag (\tg)^\dag}{p} \\
& = \norm{(f \tg)^\dag}{p} \\
& = \norm{f \tg}{p},
\end{align*}
where Lemmata \ref{Alice}\eqref{Annie} and \ref{Orestes} are used in the second and fourth equalities.

Now we verify that $\Re \int a \widetilde{a} \conj{b \widetilde{b}} = \Re \int f \tf \, \conj{g \tg}$.
When $t=n$ or $h$, this is clear, and when $t=s$, we see that $\Re \int a \widetilde{a} \conj{b \widetilde{b}}=\Re \int g \tg \conj{f \tf}$, which is equal to $\Re \int f \tf \, \conj{g \tg}$, since conjugation of the integrand does not change the real part of the integral.
Finally, if $t=r$, then
\begin{align*}
\Re \int a \ta \conj{b \tb}
& = \Re \int f^\dag \widetilde{f^\dag} \conj{g^\dag \widetilde{g^\dag}} \\
& = \Re \int f^\dag (\tf)^\dag \conj{g^\dag (\tg)^\dag}\\
& = \Re \int \conj{\conj{(f \tf)^\dag}} \conj{(g \tg)^\dag} \\
& = \Re \int \left(\conj{z^{-2\deg f} f \tf}\right) \left(z^{-2\deg g} g \tg\right) \\
& = \Re \int \conj{f \tf} g \tg \\
& = \Re \int f \tf \, \conj{g \tg},
\end{align*}
where the second equality uses Lemma \ref{Alice}\eqref{Annie} (and the fact that $f$ and $g$ are assumed to have the same degree), the fourth equality uses Lemma \ref{Alice}\eqref{Alexander}, the fifth equality uses the fact that $\conj{z}=z^{-1}$ on the complex unit circle, and the last equality uses the fact that conjugation of the integrand does not change the real part of the integral.
\end{proof}
\begin{corollary}\label{Eric}
Let $f_0, g_0 \in \C[z]$ be a polynomials of length $\ell$ with nonzero constant coefficients, let $t$ be an element of the group $\cgr$ be the group described in Proposition \ref{Priscilla}, and let $(a_0,b_0)=t(f_0,g_0)$.
If $f_0,f_1,\ldots$ and $g_0,g_1,\ldots$ and $a_0,a_1,\ldots$ and $b_0,b_1,\ldots$ are sequences of Rudin-Shapiro-like polynomials generated from $f_0$, $g_0$, $a_0$, and $b_0$ via recursion \eqref{Francis}, then
\begin{align*}
\lim_{n \to \infty} \CDF(a_n,b_n) & = \lim_{n \to \infty} \CDF(f_n,g_n) \text{ and}\\
\lim_{n \to \infty} \PSC(a_n,b_n) & = \lim_{n \to \infty} \PSC(f_n,g_n).
\end{align*}
\end{corollary}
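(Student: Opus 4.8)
The plan is to reduce the corollary entirely to the closed-form formulas of Corollary~\ref{Samuel} together with the invariance relations collected in Proposition~\ref{Priscilla}. Write $(a_0,b_0)=t(f_0,g_0)$, and let $a_0,a_1,\ldots$ and $b_0,b_1,\ldots$ be the stems these seeds generate under the given sign sequence. Since the limiting demerit and merit factors produced by Corollary~\ref{Samuel} depend only on the seeds (and not on the sign sequence), it suffices to verify that the right-hand sides of those limiting formulas take the same value whether one feeds in $(f_0,g_0)$ or $(a_0,b_0)$.

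For the crosscorrelation statement this is immediate. Corollary~\ref{Samuel} writes $\lim_{n\to\infty}\CDF(f_n,g_n)$ as a fixed rational function of $\normtt{f_0 g_0}$, $\normtt{f_0\tg_0}$, $\Re\int f_0\tf_0\conj{g_0\tg_0}$, and $\normtt{f_0}\normtt{g_0}$, the last being the square of $\norm{f_0}{2}\norm{g_0}{2}$. Specializing the first three displayed relations of Proposition~\ref{Priscilla} to $p=2$ and squaring, and reading off its fourth relation directly, we see that each of these four quantities is unchanged when $(f_0,g_0)$ is replaced by $(a_0,b_0)=t(f_0,g_0)$. Hence $\lim_{n\to\infty}\CDF(a_n,b_n)=\lim_{n\to\infty}\CDF(f_n,g_n)$.

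For the Pursley--Sarwate statement, recall $\PSC(f_n,g_n)=\sqrt{\ADF(f_n)\,\ADF(g_n)}+\CDF(f_n,g_n)$, so by the previous paragraph it remains only to show that the product $\lim_{n\to\infty}\ADF(a_n)\cdot\lim_{n\to\infty}\ADF(b_n)$ equals the corresponding product for $f$ and $g$. Here Proposition~\ref{Priscilla} does not quite suffice on its own, since the single-coordinate quantity $2\normff{a_0}+2\normtt{a_0\ta_0}-3\normtf{a_0}$ is not among its listed invariants, so I would check the product on the four generators $s,n,h,r$ of $\cgr$. Each of $n$, $h$, and $r$ acts coordinate-wise, carrying $f_0$ to its image under the like-named generator of the single-seed group $\agr$ of Proposition~\ref{Penelope}, and $g_0$ likewise; so Corollary~\ref{Wilbur} gives $\lim_{n\to\infty}\ADF(a_n)=\lim_{n\to\infty}\ADF(f_n)$ and $\lim_{n\to\infty}\ADF(b_n)=\lim_{n\to\infty}\ADF(g_n)$, leaving the product unchanged. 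The generator $s$ interchanges the coordinates, hence merely swaps the two factors, so the product is again unchanged. Combining this with the crosscorrelation case gives $\lim_{n\to\infty}\PSC(a_n,b_n)=\lim_{n\to\infty}\PSC(f_n,g_n)$.

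The main obstacle is really only bookkeeping, and it has already been dispatched inside Proposition~\ref{Priscilla}, so nothing new is required here: when $\ell$ is even the operation $f\mapsto\widetilde{f^\dag}$ acquires the sign $(-1)^{\deg f}$ of Lemma~\ref{Alice}\eqref{Annie}, and $h$ and $r$ fail to commute on pairs in the naive sense. But such signs never affect the conclusion, since they occur inside $|\cdot|^2$ in the norm identities, are absorbed by the real part together with the integral-invariance identities of Lemmata~\ref{Leonard} and \ref{Alice} in the integral identity, and are invisible to autocorrelation because negating a seed leaves its stem's $\ADF$ unchanged. Thus the whole proof amounts to quoting Corollary~\ref{Samuel}, Proposition~\ref{Priscilla}, and Corollary~\ref{Wilbur} and reading off the equalities.
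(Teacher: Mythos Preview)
Your proof is correct and follows essentially the same route as the paper: both invoke the seed-level formula of Corollary~\ref{Samuel} (equivalently Theorem~\ref{Vivian}) and then appeal to the invariance relations of Proposition~\ref{Priscilla} for the $\CDF$ part, and to Corollary~\ref{Wilbur} for the $\ADF$ product inside $\PSC$.

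The one small difference is in how you handle the $\ADF$ product. The paper argues from the group structure that every $t\in\cgr$ can be written so that $(a_0,b_0)$ is either $(u(f_0),v(g_0))$ or $(v(g_0),u(f_0))$ for some $u,v\in\agr$, and then applies Corollary~\ref{Wilbur} once. You instead verify invariance of the product on the four generators $s,n,h,r$ and implicitly use that this invariance composes under products of group elements. Both are valid; your generator check is slightly more elementary and avoids having to assert (or verify) the structural decomposition, while the paper's version makes the underlying reason a bit more transparent. Either way the content is the same.
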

\begin{proof}
By Theorem \ref{Vivian}, we have
\[
\lim_{n \to \infty} \CDF(f_n,g_n) = \frac{2 \normtt{f_0 g_0}+\normtt{f_0 \tg_0} + \Re  \int f_0 \tf_0 \conj{g_0 \tg_0}}{3 \normtt{f_0} \normtt{g_0}},
\]
but Proposition \ref{Priscilla} shows that the values of the three terms in the numerator and the value of the denominator of the right hand side do not change if we replace every instance of $f_0$ with $a_0$ and every instance of $g_0$ with $b_0$.  These replacements change the right hand side to $\lim_{n \to \infty} \CDF(a_n,b_n)$ by Theorem \ref{Vivian}.

Because of the structure of the groups $\agr$ and $\cgr$ described in Propositions \ref{Penelope} and \ref{Priscilla} above, one can say that there exist $u,v \in \agr$ such that either $(a_0,b_0)=(u(f_0),v(g_0))$ or $(a_0,b_0)=(v(g_0),u(f_0))$.  Thus by Corollary \ref{Wilbur},
\[
\lim_{n \to \infty} \ADF(a_n) \ADF(b_n) = \lim_{n \to \infty} \ADF(f_n) \ADF(g_n),
\]
and so, considering the formula \eqref{Paul} for the Pursley-Sarwate Criterion, we see that
\[
\lim_{n \to \infty} \PSC(a_n,b_n) = \lim_{n \to \infty} \PSC(f_n,g_n).\qedhere
\]
\end{proof}

\section{Some Examples of Pairs of Rudin-Shapiro-Like Sequences with Low Correlation}\label{Thomas}

For each $\ell \leq 52$, we considered every possible Littlewood polynomial $f_0$ of length $\ell$, and used computers, including opportunistic use of distributed resources through the Open Science Grid \cite{OSG-1,OSG-2}, to calculate via Corollary \ref{Barbara} (a result originally due to Borwein and Mossinghoff \cite[Theorem 1]{Borwein-Mossinghoff}) the limiting autocorrelation demerit factor of the stem $f_0,f_1,\ldots$ constructed from seed $f_0$ via our recursion \eqref{Francis}.  For each length $\ell$, we report on Table \ref{Timothy} the lowest limiting autocorrelation demerit factor achieved, and indicate how many seeds achieve this minimum value.  Seeds that are equivalent modulo the action of the group $\agr$ described in Proposition \ref{Penelope} always have the same limiting autocorrelation demerit factor by Corollary \ref{Wilbur}, and so we group seeds into orbits under the action of $\agr$ and report how many distinct orbits there are on Table \ref{Timothy}.
\renewcommand{\arraystretch}{1.3}%
\begin{table}[!ht]
\caption{Lowest Limiting Autocorrelation Demerit Factor for Seeds of Each Length}\label{Timothy}
\begin{center}
\begin{tabular}{|c|r|cc|c|}
\hline
seed & \multicolumn{1}{c|}{limiting} & number of & number of & sample \\
length & \multicolumn{1}{c|}{$\ADF(f_n)$} & sequences & orbits & seed $f_0$ \\ \hline\hline
$1$ & $\frac{1}{3}=0.3333\ldots$ & $2$ & $1$ & {\tt 2} \\ \hline
$2$ & $\frac{1}{3}=0.3333\ldots$ & $4$ & $1$ & {\tt 0} \\ \hline
$3$ & $\frac{17}{27}=0.6296\ldots$ & $8$ & $2$ & {\tt 0} \\ \hline
$4$ & $\frac{1}{3}=0.3333\ldots$ & $8$ & $1$ & {\tt 1} \\ \hline
$5$ & $\frac{41}{75}=0.5466\ldots$ & $24$ & $4$ & {\tt 01} \\ \hline
$6$ & $\frac{17}{27}=0.6296\ldots$ & $56$ & $8$ & {\tt 01} \\ \hline
$7$ & $\frac{73}{147}=0.4965\ldots$ & $56$ & $8$ & {\tt 03} \\ \hline
$8$ & $\frac{1}{3}=0.3333\ldots$ & $32$ & $4$ & {\tt 06} \\ \hline
$9$ & $\frac{113}{243}=0.4650\ldots$ & $144$ & $18$ & {\tt 006} \\ \hline
$10$ & $\frac{41}{75}=0.5466\ldots$ & $504$ & $64$ & {\tt 006} \\ \hline
$11$ & $\frac{161}{363}=0.4435\ldots$ & $168$ & $22$ & {\tt 01C} \\ \hline
$12$ & $\frac{11}{27}=0.4074\ldots$ & $96$ & $12$ & {\tt 036} \\ \hline
$13$ & $\frac{217}{507}=0.4280\ldots$ & $344$ & $44$ & {\tt 0036} \\ \hline
$14$ & $\frac{73}{147}=0.4965\ldots$ & $2648$ & $332$ & {\tt 0036} \\ \hline
$15$ & $\frac{281}{675}=0.4162\ldots$ & $688$ & $86$ & {\tt 0163} \\ \hline
$16$ & $\frac{1}{3}=0.3333\ldots$ & $192$ & $24$ & {\tt 0359} \\ \hline
$17$ & $\frac{353}{867}=0.4071\ldots$ & $1472$ & $184$ & {\tt 001C9} \\ \hline
$18$ & $\frac{113}{243}=0.4650\ldots$ & $12992$ & $1624$ & {\tt 001C9} \\ \hline
$19$ & $\frac{433}{1083}=0.3998\ldots$ & $784$ & $98$ & {\tt 00793} \\ \hline
$20$ & $\frac{1}{3}=0.3333\ldots$ & $128$ & $16$ & {\tt 05239} \\ \hline
$21$ & $\frac{521}{1323}=0.3938\ldots$ & $1312$ & $164$ & {\tt 000F19} \\ \hline
$22$ & $\frac{161}{363}=0.4435\ldots$ & $35352$ & $4420$ & {\tt 000F19} \\ \hline
$23$ & $\frac{617}{1587}=0.3887\ldots$ & $1696$ & $212$ & {\tt 0066B4} \\ \hline
$24$ & $\frac{19}{54}=0.3518\ldots$ & $320$ & $40$ & {\tt 00CD69} \\ \hline
$25$ & $\frac{721}{1875}=0.3845\ldots$ & $2176$ & $272$ & {\tt 000CD29} \\ \hline
$26$ & $\frac{217}{507}=0.4280\ldots$ & $104920$ & $13116$ & {\tt 0007866} \\ \hline
\end{tabular}
\end{center}
\end{table}%
\setcounter{table}{0}%
\begin{table}[!ht]
\caption{(continued) Lowest Limiting Autocorrelation Demerit Factor for Seeds of Each Length}
\begin{center}
\begin{tabular}{|c|r|cc|c|}
\hline
seed & \multicolumn{1}{c|}{limiting} & number of & number of & sample \\
length & \multicolumn{1}{c|}{$\ADF(f_n)$} & sequences & orbits & seed $f_0$ \\ \hline\hline
$27$ & $\frac{833}{2187}=0.3808\ldots$ & $1888$ & $236$ & {\tt 006B274} \\ \hline
$28$ & $\frac{53}{147}=0.3605\ldots$ & $512$ & $64$ & {\tt 00DB171} \\ \hline
$29$ & $\frac{953}{2523}=0.3777\ldots$ & $3040$ & $380$ & {\tt 000E4B8D} \\ \hline
$30$ & $\frac{281}{675}=0.4162\ldots$ & $266688$ & $33336$ & {\tt 0006C729} \\ \hline
$31$ & $\frac{1081}{2883}=0.3749\ldots$ & $6368$ & $796$ & {\tt 001E2D33} \\ \hline
$32$ & $\frac{1}{3}=0.3333\ldots$ & $1536$ & $192$ & {\tt 003C5A66} \\ \hline
$33$ & $\frac{1217}{3267}=0.3725\ldots$ & $10400$ & $1300$ & {\tt 0003C5A66} \\ \hline
$34$ & $\frac{353}{867}=0.4071\ldots$ & $554752$ & $69344$ & {\tt 0003C5A66} \\ \hline
$35$ & $\frac{1361}{3675}=0.3703\ldots$ & $1216$ & $152$ & {\tt 001F1699C} \\ \hline
$36$ & $\frac{29}{81}=0.3580\ldots$ & $640$ & $80$ & {\tt 0034EC5A6} \\ \hline
$37$ & $\frac{1513}{4107}=0.3683\ldots$ & $1760$ & $220$ & {\tt 00035AC726} \\ \hline
$38$ & $\frac{433}{1083}=0.3998\ldots$ & $840256$ & $105032$ & {\tt 00034E94E6} \\ \hline
$39$ & $\frac{1673}{4563}=0.3666\ldots$ & $4416$ & $552$ & {\tt 0019E2D2B3} \\ \hline
$40$ & $\frac{1}{3}=0.3333\ldots$ & $1088$ & $136$ & {\tt 0033C5A566} \\ \hline
$41$ & $\frac{1841}{5043}=0.3650\ldots$ & $7328$ & $916$ & {\tt 00033C5A566} \\ \hline
$42$ & $\frac{521}{1323}=0.3938\ldots$ & $1589568$ & $198696$ & {\tt 0001E5A3599} \\ \hline
$43$ & $\frac{2017}{5547}=0.3636\ldots$ & $2592$ & $324$ & {\tt 0015B878CCB} \\ \hline
$44$ & $\frac{125}{363}=0.3443\ldots$ & $256$ & $32$ & {\tt 00178B4B326} \\ \hline
$45$ & $\frac{2201}{6075}=0.3623\ldots$ & $2272$ & $284$ & {\tt 0001E9663D33} \\ \hline
$46$ & $\frac{617}{1587}=0.3887\ldots$ & $2690528$ & $336316$ & {\tt 0000FC31E199} \\ \hline
$47$ & $\frac{2393}{6627}=0.3610\ldots$ & $2752$ & $344$ & {\tt 0006E529E49C} \\ \hline
$48$ & $\frac{73}{216}=0.3379\ldots$ & $128$ & $16$ & {\tt 003C3315A9A6} \\ \hline
$49$ & $\frac{2593}{7203}=0.3599\ldots$ & $2720$ & $340$ & {\tt 0000F30F4A665} \\ \hline
$50$ & $\frac{721}{1875}=0.3845\ldots$ & $3751392$ & $468924$ & {\tt 0000DD83C6696} \\ \hline
$51$ & $\frac{2801}{7803}=0.3589\ldots$ & $1536$ & $192$ & {\tt 0006F1C6D2372} \\ \hline
$52$ & $\frac{1}{3}=0.3333\ldots$ & $64$ & $8$ & {\tt 00C3CC459A96A} \\ \hline
\end{tabular}
\end{center}
\end{table}%
\renewcommand{\arraystretch}{1.0}%
This distributed computational search summarized Table \ref{Timothy} used roughly 400,000 hours of wall-clock time for the processors of our own and those of the Open Science Grid.

For each length $\ell$, we give one example of a seed $f_0$ whose stem achieves the smallest limiting autocorrelation demerit factor.  Example seeds are reported using a hexadecimal code.  To decode, expand each hexadecimal digit into binary form ({\tt 0} $\to$ {\tt 0000}, {\tt 1} $\to$ {\tt 0001}, $\ldots$, {\tt F} $\to$ {\tt 1111}) and, if necessary, remove initial {\tt 0} symbols to obtain a binary sequence of the appropriate length.  Then convert each {\tt 0} to $+1$ and each {\tt 1} to $-1$ to obtain the list of coefficients of the seed $f_0$.  For example, Table \ref{Mary} reports for length $\ell=14$ that one seed of interest is {\tt 149B}.  Expand to {\tt 0001\,\,0100\,\,1001\,\,1011} and delete the initial two zeroes to obtain a sequence {\tt 01\,\,0100\,\,1001\,\,1011} of length $\ell=14$.  Convert from $0,1$ to $\pm 1$ to obtain the coefficients of 
\[
g_0(z)=1-z+z^2-z^3+z^4+z^5-z^6+z^7+z^8-z^9-z^{10}+z^{11}-z^{12}-z^{13}.
\]

Borwein-Mossinghoff \cite[Corollary 1]{Borwein-Mossinghoff} proved that the limiting autocorrelation demerit factor for any sequence $f_0,f_1,\ldots$ of Rudin-Shapiro-like Littlewood polynomials generated from a seed $f_0$ using recursion \eqref{Francis} can never be less than $1/3$ (see also our Theorem \ref{Vivian}).   Their computer experiments show that for seeds of length $\ell \leq 40$, there exist seeds whose stems achieve limiting autocorrelation demerit factor $1/3$ if and only if $\ell \in \{1,2,4,8,16,20,32,40\}$.  We also discovered seeds of length $52$ whose stems achieve limiting autocorrelation demerit factor $1/3$.  The first and third authors have now proved \cite{Katz-Trunov} that a seed $f_0$ of length $\ell > 1$ produces a stem with limiting autocorrelation demerit factor $1/3$ if and only if $f_0$ is the interleaving of the two sequences of some Golay complementary pair.  This explains why both Borwein and Mossinghoff's searches and ours produced seeds with optimal asymptotic autocorrelation at the lengths that we have observed.  For lengths $\ell \leq 40$, where there are seeds with optimal asymptotic correlation, Borwein and Mossinghoff also indicate how many distinct optimal seeds there are for each such length.  Our computer experiments agree with theirs, but we also present on Table \ref{Timothy} the minimum limiting autocorrelation demerit factors for all lengths $\ell \leq 52$, regardless of whether or not the minimum is $1/3$.

Now let us also consider crosscorrelation.
In view of Proposition \ref{Elaine} and Remark \ref{Raphael}, we already know we can achieve limiting crosscorrelation demerit factors as close to $0$ as we like, but we have observed that pairs of stems with very low limiting crosscorrelation demerit factor tend to have very high autocorrelation demerit factor, and are therefore of little practical value.  This is not surprising, given the bound \eqref{Hyeon} of Pursley and Sarwate.
It is much more enlightening to ask how low one can make the limiting Pursley-Sarwate Criterion \eqref{Paul}, which combines both autocorrelation and crosscorrelation performance.

For each $\ell \leq 28$, we considered every possible pair of Littlewood polynomials $(f_0,g_0)$ of length $\ell$, and used computers, including opportunistic use of distributed resources through the Open Science Grid \cite{OSG-1,OSG-2}, to calculate via Theorem \ref{Vivian} the limiting crosscorrelation demerit factors of the pair of stems $(f_0,f_1,\ldots;g_0,g_1,\ldots)$ constructed from our recursion \eqref{Francis}.  We also calculate the limiting autocorrelation demerit factors for each of the two stems, and from all three of these limits, we obtain the limiting Pursley-Sarwate Criterion.  Table \ref{Mary} records the lowest limiting Pursley-Sarwate Criterion achieved for each $\ell \leq 28$, and records the seed pairs $(f_0,g_0)$ that give rise to the pairs of stems that achieve this minimum.
Seed pairs that are equivalent modulo the action of the group $\cgr$ described in Proposition \ref{Priscilla} will always have the same limiting Pursley-Sarwate Criterion by Corollary \ref{Eric}, and so we group seeds pairs into orbits under the action of $\cgr$.  We report one representative of each class on Table \ref{Mary} using our hexadecimal code (described above in the discussion of Table \ref{Timothy}) and also report the size of the orbit.  For some lengths there are multiple equivalence classes that achieve the same minimum limiting Pursley-Sarwate Criterion: each such class has its own line on the table.
\renewcommand{\arraystretch}{1.3}%
\begin{table}[!ht]
\caption{Lowest Limiting Pursley-Sarwate Criterion for Seeds of Each Length}\label{Mary}
\begin{center}
\begin{tabular}{|c|lccc|c|c|c|}
\hline
seed   & \multicolumn{4}{|c|}{limiting values as $n\to\infty$} & orbit & \multicolumn{2}{c|}{seeds} \\
length & \multicolumn{1}{c}{\tiny{$\PSC(f_n,g_n)$}} & {\tiny $\ADF(f_n)$} & {\tiny $\ADF(g_n)$} & {\tiny $\CDF(f_n,g_n)$} & size & $f_0$ & $g_0$  \\
\hline\hline
$1$ & $1.6666\ldots$ & $\frac{1}{3}$ & $\frac{1}{3}$ & $\frac{4}{3}$ & $4$ & \stt{0} & \stt{0} \\\hline
$2$ & $1.3333\ldots$ & $\frac{1}{3}$ & $\frac{1}{3}$ & $\frac{1}{1}$ & $8$ & \stt{0} & \stt{1} \\\hline
$3$ & $1.3703\ldots$ & $\frac{17}{27}$ & $\frac{17}{27}$ & $\frac{20}{27}$ & $32$ & \stt{0} & \stt{1} \\\hline
$4$ & $1.1666\ldots$ & $\frac{1}{3}$ & $\frac{1}{3}$ & $\frac{5}{6}$ & $16$ & \stt{1} & \stt{2} \\\hline
$5$ & $1.3466\ldots$ & $\frac{41}{75}$ & $\frac{41}{75}$ & $\frac{4}{5}$ & $32$ & \stt{01} & \stt{02} \\
$5$ & $1.3466\ldots$ & $\frac{41}{75}$ & $\frac{41}{75}$ & $\frac{4}{5}$ & $32$ & \stt{01} & \stt{08} \\
$5$ & $1.3466\ldots$ & $\frac{41}{75}$ & $\frac{41}{75}$ & $\frac{4}{5}$ & $32$ & \stt{01} & \stt{0D} \\\hline
$6$ & $1.2962\ldots$ & $\frac{17}{27}$ & $\frac{17}{27}$ & $\frac{2}{3}$ & $32$ & \stt{02} & \stt{0D} \\
$6$ & $1.2962\ldots$ & $\frac{17}{27}$ & $\frac{17}{27}$ & $\frac{2}{3}$ & $32$ & \stt{04} & \stt{0B} \\\hline
$7$ & $1.2312\ldots$ & $\frac{73}{147}$ & $\frac{73}{147}$ & $\frac{36}{49}$ & $32$ & \stt{04} & \stt{1A} \\\hline
$8$ & $1.1666\ldots$ & $\frac{1}{3}$ & $\frac{1}{3}$ & $\frac{5}{6}$ & $32$ & \stt{06} & \stt{3A} \\
$8$ & $1.1666\ldots$ & $\frac{1}{3}$ & $\frac{1}{3}$ & $\frac{5}{6}$ & $32$ & \stt{12} & \stt{2E} \\\hline
$9$ & $1.2057\ldots$ & $\frac{113}{243}$ & $\frac{113}{243}$ & $\frac{20}{27}$ & $32$ & \stt{009} & \stt{035} \\\hline
$10$ & $1.1733\ldots$ & $\frac{41}{75}$ & $\frac{41}{75}$ & $\frac{47}{75}$ & $32$ & \stt{04D} & \stt{0A1} \\\hline
$11$ & $1.1818\ldots$ & $\frac{161}{363}$ & $\frac{161}{363}$ & $\frac{268}{363}$ & $32$ & \stt{032} & \stt{251} \\\hline
$12$ & $1.1666\ldots$ & $\frac{11}{27}$ & $\frac{11}{27}$ & $\frac{41}{54}$ & $32$ & \stt{065} & \stt{6A3} \\\hline
$13$ & $1.1734\ldots$ & $\frac{217}{507}$ & $\frac{281}{507}$ & $\frac{116}{169}$ & $32$ & \stt{00CA} & \stt{03AD} \\
$13$ & $1.1734\ldots$ & $\frac{217}{507}$ & $\frac{281}{507}$ & $\frac{116}{169}$ & $32$ & \stt{00CA} & \stt{0907} \\\hline
$14$ & $1.1836\ldots$ & $\frac{73}{147}$ & $\frac{73}{147}$ & $\frac{101}{147}$ & $32$ & \stt{0071} & \stt{149B} \\\hline
\end{tabular}
\end{center}
\end{table}%
\setcounter{table}{1}%
\begin{table}[!ht]
\caption{(continued) Lowest Limiting Pursley-Sarwate Criterion for Seeds of Each Length}
\begin{center}
\begin{tabular}{|c|lccc|c|c|c|}
\hline
seed   & \multicolumn{4}{|c|}{limiting values as $n\to\infty$} & orbit & \multicolumn{2}{c|}{seeds} \\
length & \multicolumn{1}{c}{\tiny{$\PSC(f_n,g_n)$}} & {\tiny $\ADF(f_n)$} & {\tiny $\ADF(g_n)$} & {\tiny $\CDF(f_n,g_n)$} & size & $f_0$ & $g_0$  \\
\hline\hline
$15$ & $1.1546\ldots$ & $\frac{281}{675}$ & $\frac{23}{45}$ & $\frac{52}{75}$ & $32$ & \stt{024E} & \stt{15C3} \\\hline
$16$ & $1.1041\ldots$ & $\frac{1}{3}$ & $\frac{1}{3}$ & $\frac{37}{48}$ & $32$ & \stt{0A36} & \stt{11D2} \\\hline
$17$ & $1.1407\ldots$ & $\frac{353}{867}$ & $\frac{353}{867}$ & $\frac{212}{289}$ & $32$ & \stt{0038D} & \stt{0EE96} \\\hline
$18$ & $1.1481\ldots$ & $\frac{113}{243}$ & $\frac{113}{243}$ & $\frac{166}{243}$ & $32$ & \stt{0039A} & \stt{0E8F6} \\\hline
$19$ & $1.1559\ldots$ & $\frac{433}{1083}$ & $\frac{497}{1083}$ & $\frac{788}{1083}$ & $32$ & \stt{00E4D} & \stt{38A16} \\
$19$ & $1.1559\ldots$ & $\frac{433}{1083}$ & $\frac{497}{1083}$ & $\frac{788}{1083}$ & $32$ & \stt{0C56D} & \stt{0E013} \\\hline
$20$ & $1.1363\ldots$ & $\frac{11}{25}$ & $\frac{1}{3}$ & $\frac{113}{150}$ & $32$ & \stt{08FA6} & \stt{5A230} \\\hline
$21$ & $1.1338\ldots$ & $\frac{521}{1323}$ & $\frac{65}{147}$ & $\frac{316}{441}$ & $32$ & \stt{00F765} & \stt{05DAF3} \\\hline
$22$ & $1.1515\ldots$ & $\frac{161}{363}$ & $\frac{161}{363}$ & $\frac{257}{363}$ & $32$ & \stt{0188B5} & \stt{1341DE} \\
$22$ & $1.1515\ldots$ & $\frac{161}{363}$ & $\frac{161}{363}$ & $\frac{257}{363}$ & $32$ & \stt{022D85} & \stt{0C74FD} \\\hline
$23$ & $1.1203\ldots$ & $\frac{745}{1587}$ & $\frac{617}{1587}$ & $\frac{1100}{1587}$ & $32$ & \stt{0BA421} & \stt{376A38} \\\hline
$24$ & $1.1292\ldots$ & $\frac{23}{54}$ & $\frac{7}{18}$ & $\frac{13}{18}$ & $32$ & \stt{02B25C} & \stt{7A8C2C} \\\hline
$25$ & $1.1372\ldots$ & $\frac{157}{375}$ & $\frac{721}{1875}$ & $\frac{92}{125}$ & $32$ & \stt{00A9273} & \stt{0BFC9C7} \\\hline
$26$ & $1.1350\ldots$ & $\frac{217}{507}$ & $\frac{83}{169}$ & $\frac{343}{507}$ & $32$ & \stt{075D9AD} & \stt{1ACFF83} \\\hline
$27$ & $1.1323\ldots$ & $\frac{961}{2187}$ & $\frac{299}{729}$ & $\frac{172}{243}$ & $32$ & \stt{014E48A} & \stt{03A3DE6} \\\hline
$28$ & $1.1258\ldots$ & $\frac{59}{147}$ & $\frac{59}{147}$ & $\frac{71}{98}$ & $32$ & \stt{09467C5} & \stt{60EA253} \\\hline
\end{tabular}
\end{center}
\end{table}%
\renewcommand{\arraystretch}{1.0}%
This distributed computational search summarized Table \ref{Mary} used roughly 100,000 hours of wall-clock time for the processors of our own and those of the Open Science Grid.

In Table \ref{Andrew} we also present sequences with low limiting Pursley-Sarwate Criterion.
For a given length $\ell \leq 52$, we first found all the seeds that produce stems whose limiting autocorrelation demerit factors reach the minimum value for that length, as reported in Table \ref{Timothy}.  Then we compute which pairs of these seeds $(f_0,g_0)$ produce pairs of stems with the lowest limiting crosscorrelation demerit factor (and therefore the lowest limiting Pursley-Sarwate Criterion, since they all have the same limiting autocorrelation demerit factors).
Given the data file that records the seeds of a given length whose stems have minimum limiting autocorrelation demerit factor (produced while compiling Table \ref{Timothy}), the task of computing the lowest limiting Pursley-Sarwate Criterion among stems produced from these seeds took very little time (around a minute at most, and usually much less).
Seed pairs that are equivalent modulo the action of the group $\cgr$ described in Proposition \ref{Priscilla} will always have the same limiting Pursley-Sarwate Criterion by Corollary \ref{Eric}, and so we group seeds pairs into orbits under the action of $\cgr$.  We report one representative of each class on Table \ref{Andrew} using our hexadecimal code (described above in the discussion of Table \ref{Timothy}) and also report the size of the orbit.  For some lengths there are multiple equivalence classes that achieve the same minimum limiting Pursley-Sarwate Criterion: each such class has its own line on the table.
\renewcommand{\arraystretch}{1.3}%
\begin{table}[!ht]
\caption{Lowest Limiting Pursley-Sarwate Criterion among Seed Pairs that Have the Lowest Limiting Autocorrelation Demerit Factor}\label{Andrew}
\begin{center}
\begin{tabular}{|c|lccc|c|c|c|}
\hline
\footnotesize{seed}  & \multicolumn{4}{|c|}{limiting values as $n\to\infty$} & \footnotesize{orbit} & \multicolumn{2}{c|}{{\footnotesize seeds}} \\
\footnotesize{length} & \multicolumn{1}{c}{\tiny{$\PSC(f_n,g_n)$}} & {\tiny $\ADF(f_n)$} & {\tiny $\ADF(g_n)$} & {\tiny $\CDF(f_n,g_n)$} & \footnotesize{size} & {\footnotesize $f_0$} & {\footnotesize $g_0$}  \\
\hline\hline
$1$ & $1.6666\ldots$ & $\frac{1}{3}$ & $\frac{1}{3}$ & $\frac{4}{3}$ & $4$ & \ttt{0} & \ttt{0} \\\hline
$2$ & $1.3333\ldots$ & $\frac{1}{3}$ & $\frac{1}{3}$ & $\frac{1}{1}$ & $8$ & \ttt{0} & \ttt{1} \\\hline
$3$ & $1.3703\ldots$ & $\frac{17}{27}$ & $\frac{17}{27}$ & $\frac{20}{27}$ & $32$ & \ttt{0} & \ttt{1} \\\hline
$4$ & $1.1666\ldots$ & $\frac{1}{3}$ & $\frac{1}{3}$ & $\frac{5}{6}$ & $16$ & \ttt{1} & \ttt{2} \\\hline
$5$ & $1.3466\ldots$ & $\frac{41}{75}$ & $\frac{41}{75}$ & $\frac{4}{5}$ & $32$ & \ttt{01} & \ttt{02} \\
$5$ & $1.3466\ldots$ & $\frac{41}{75}$ & $\frac{41}{75}$ & $\frac{4}{5}$ & $32$ & \ttt{01} & \ttt{08} \\
$5$ & $1.3466\ldots$ & $\frac{41}{75}$ & $\frac{41}{75}$ & $\frac{4}{5}$ & $32$ & \ttt{01} & \ttt{0D} \\\hline
$6$ & $1.2962\ldots$ & $\frac{17}{27}$ & $\frac{17}{27}$ & $\frac{2}{3}$ & $32$ & \ttt{02} & \ttt{0D} \\
$6$ & $1.2962\ldots$ & $\frac{17}{27}$ & $\frac{17}{27}$ & $\frac{2}{3}$ & $32$ & \ttt{04} & \ttt{0B} \\\hline
$7$ & $1.2312\ldots$ & $\frac{73}{147}$ & $\frac{73}{147}$ & $\frac{36}{49}$ & $32$ & \ttt{04} & \ttt{1A} \\\hline
$8$ & $1.1666\ldots$ & $\frac{1}{3}$ & $\frac{1}{3}$ & $\frac{5}{6}$ & $32$ & \ttt{06} & \ttt{3A} \\
$8$ & $1.1666\ldots$ & $\frac{1}{3}$ & $\frac{1}{3}$ & $\frac{5}{6}$ & $32$ & \ttt{12} & \ttt{2E} \\\hline
$9$ & $1.2057\ldots$ & $\frac{113}{243}$ & $\frac{113}{243}$ & $\frac{20}{27}$ & $32$ & \ttt{009} & \ttt{035} \\\hline
$10$ & $1.1733\ldots$ & $\frac{41}{75}$ & $\frac{41}{75}$ & $\frac{47}{75}$ & $32$ & \ttt{04D} & \ttt{0A1} \\\hline
$11$ & $1.1818\ldots$ & $\frac{161}{363}$ & $\frac{161}{363}$ & $\frac{268}{363}$ & $32$ & \ttt{032} & \ttt{251} \\\hline
$12$ & $1.1666\ldots$ & $\frac{11}{27}$ & $\frac{11}{27}$ & $\frac{41}{54}$ & $32$ & \ttt{065} & \ttt{6A3} \\\hline
$13$ & $1.1775\ldots$ & $\frac{217}{507}$ & $\frac{217}{507}$ & $\frac{380}{507}$ & $32$ & \ttt{01DB} & \ttt{0D47} \\\hline
$14$ & $1.1836\ldots$ & $\frac{73}{147}$ & $\frac{73}{147}$ & $\frac{101}{147}$ & $32$ & \ttt{0071} & \ttt{149B} \\\hline
$15$ & $1.1688\ldots$ & $\frac{281}{675}$ & $\frac{281}{675}$ & $\frac{508}{675}$ & $32$ & \ttt{01AC} & \ttt{245C} \\\hline
$16$ & $1.1041\ldots$ & $\frac{1}{3}$ & $\frac{1}{3}$ & $\frac{37}{48}$ & $32$ & \ttt{0A36} & \ttt{11D2} \\\hline
$17$ & $1.1407\ldots$ & $\frac{353}{867}$ & $\frac{353}{867}$ & $\frac{212}{289}$ & $32$ & \ttt{0038D} & \ttt{0EE96} \\\hline
$18$ & $1.1481\ldots$ & $\frac{113}{243}$ & $\frac{113}{243}$ & $\frac{166}{243}$ & $32$ & \ttt{0039A} & \ttt{0E8F6} \\\hline
$19$ & $1.1643\ldots$ & $\frac{433}{1083}$ & $\frac{433}{1083}$ & $\frac{276}{361}$ & $32$ & \ttt{00F26} & \ttt{0C549} \\\hline
$20$ & $1.14$ & $\frac{1}{3}$ & $\frac{1}{3}$ & $\frac{121}{150}$ & $16$ & \ttt{05239} & \ttt{36E0A} \\\hline
$21$ & $1.1405\ldots$ & $\frac{521}{1323}$ & $\frac{521}{1323}$ & $\frac{988}{1323}$ & $32$ & \ttt{001C9A} & \ttt{063EAD} \\\hline
$22$ & $1.1515\ldots$ & $\frac{161}{363}$ & $\frac{161}{363}$ & $\frac{257}{363}$ & $32$ & \ttt{0188B5} & \ttt{1341DE} \\
$22$ & $1.1515\ldots$ & $\frac{161}{363}$ & $\frac{161}{363}$ & $\frac{257}{363}$ & $32$ & \ttt{022D85} & \ttt{0C74FD} \\\hline
\end{tabular}
\end{center}
\end{table}%
\setcounter{table}{2}%
\begin{table}[!ht]
\caption{(continued) Lowest Limiting Pursley-Sarwate Criterion among Seed Pairs that Have the Lowest Limiting Autocorrelation Demerit Factor}
\vspace{-0.9mm}
\begin{center}
\begin{tabular}{|c|lccc|c|c|c|}
\hline
\footnotesize{seed}  & \multicolumn{4}{|c|}{limiting values as $n\to\infty$} & \footnotesize{orbit} & \multicolumn{2}{c|}{{\footnotesize seeds}} \\
\footnotesize{length} & \multicolumn{1}{c}{\tiny{$\PSC(f_n,g_n)$}} & {\tiny $\ADF(f_n)$} & {\tiny $\ADF(g_n)$} & {\tiny $\CDF(f_n,g_n)$} & \footnotesize{size} & {\footnotesize $f_0$} & {\footnotesize $g_0$}  \\
\hline\hline
$23$ & $1.1424\ldots$ & $\frac{617}{1587}$ & $\frac{617}{1587}$ & $\frac{52}{69}$ & $16$ & \ttt{00ECB6} & \ttt{1C312A} \\
$23$ & $1.1424\ldots$ & $\frac{617}{1587}$ & $\frac{617}{1587}$ & $\frac{52}{69}$ & $16$ & \ttt{08643E} & \ttt{14B9A2} \\\hline
$24$ & $1.1296\ldots$ & $\frac{19}{54}$ & $\frac{19}{54}$ & $\frac{7}{9}$ & $16$ & \ttt{032695} & \ttt{03CE6A} \\\hline
$25$ & $1.1546\ldots$ & $\frac{721}{1875}$ & $\frac{721}{1875}$ & $\frac{1444}{1875}$ & $32$ & \ttt{003B3CB} & \ttt{04E50A2} \\
$25$ & $1.1546\ldots$ & $\frac{721}{1875}$ & $\frac{721}{1875}$ & $\frac{1444}{1875}$ & $32$ & \ttt{01FAE32} & \ttt{0C42A69} \\\hline
$26$ & $1.1360\ldots$ & $\frac{217}{507}$ & $\frac{217}{507}$ & $\frac{359}{507}$ & $32$ & \ttt{042347C} & \ttt{0A6B813} \\\hline
$27$ & $1.1545\ldots$ & $\frac{833}{2187}$ & $\frac{833}{2187}$ & $\frac{188}{243}$ & $32$ & \ttt{0A109EC} & \ttt{3E2ACD6} \\\hline
$28$ & $1.1326\ldots$ & $\frac{53}{147}$ & $\frac{53}{147}$ & $\frac{227}{294}$ & $32$ & \ttt{071DBB5} & \ttt{3B82B6D} \\\hline
$29$ & $1.1625\ldots$ & $\frac{953}{2523}$ & $\frac{953}{2523}$ & $\frac{660}{841}$ & $32$ & \ttt{0089E14E} & \ttt{064BADE8} \\
$29$ & $1.1625\ldots$ & $\frac{953}{2523}$ & $\frac{953}{2523}$ & $\frac{660}{841}$ & $32$ & \ttt{013A6A2D} & \ttt{021C14EC} \\\hline
$30$ & $1.1288\ldots$ & $\frac{281}{675}$ & $\frac{281}{675}$ & $\frac{481}{675}$ & $32$ & \ttt{02A6CF21} & \ttt{10164AC7} \\\hline
$31$ & $1.1394\ldots$ & $\frac{1081}{2883}$ & $\frac{1081}{2883}$ & $\frac{2204}{2883}$ & $32$ & \ttt{067E2CAB} & \ttt{13AF5F63} \\\hline
$32$ & $1.1041\ldots$ & $\frac{1}{3}$ & $\frac{1}{3}$ & $\frac{37}{48}$ & $32$ & \ttt{0A363905} & \ttt{11D21EDD} \\
$32$ & $1.1041\ldots$ & $\frac{1}{3}$ & $\frac{1}{3}$ & $\frac{37}{48}$ & $32$ & \ttt{1EDD11D2} & \ttt{39050A36} \\\hline
$33$ & $1.1303\ldots$ & $\frac{1217}{3267}$ & $\frac{1217}{3267}$ & $\frac{2476}{3267}$ & $32$ & \ttt{00C03B652} & \ttt{0B5B9C517} \\\hline
$34$ & $1.1107\ldots$ & $\frac{353}{867}$ & $\frac{353}{867}$ & $\frac{610}{867}$ & $32$ & \ttt{00598B0ED} & \ttt{0DEE9382B} \\\hline
$35$ & $1.1746\ldots$ & $\frac{1361}{3675}$ & $\frac{1361}{3675}$ & $\frac{2956}{3675}$ & $32$ & \ttt{00963532E} & \ttt{1E280FB3B} \\\hline
$36$ & $1.1872\ldots$ & $\frac{29}{81}$ & $\frac{29}{81}$ & $\frac{403}{486}$ & $16$ & \ttt{1D603A324} & \ttt{7190953ED} \\
$36$ & $1.1872\ldots$ & $\frac{29}{81}$ & $\frac{29}{81}$ & $\frac{403}{486}$ & $16$ & \ttt{1DA30602B} & \ttt{7EACA6F12} \\\hline
$37$ & $1.1796\ldots$ & $\frac{1513}{4107}$ & $\frac{1513}{4107}$ & $\frac{3332}{4107}$ & $16$ & \ttt{00035AC726} & \ttt{0636C1F2AA} \\
$37$ & $1.1796\ldots$ & $\frac{1513}{4107}$ & $\frac{1513}{4107}$ & $\frac{3332}{4107}$ & $16$ & \ttt{0223D0E7AE} & \ttt{04164BD222} \\\hline
$38$ & $1.1209\ldots$ & $\frac{433}{1083}$ & $\frac{433}{1083}$ & $\frac{781}{1083}$ & $32$ & \ttt{0210F456C9} & \ttt{1A2E842E73} \\\hline
$39$ & $1.1477\ldots$ & $\frac{1673}{4563}$ & $\frac{1673}{4563}$ & $\frac{132}{169}$ & $32$ & \ttt{0019E3352D} & \ttt{07819B4CAA} \\
$39$ & $1.1477\ldots$ & $\frac{1673}{4563}$ & $\frac{1673}{4563}$ & $\frac{132}{169}$ & $32$ & \ttt{166AAF0FCC} & \ttt{195A5FFCC3} \\\hline
$40$ & $1.1033\ldots$ & $\frac{1}{3}$ & $\frac{1}{3}$ & $\frac{77}{100}$ & $16$ & \ttt{0033C66A5A} & \ttt{0F03369955} \\
$40$ & $1.1033\ldots$ & $\frac{1}{3}$ & $\frac{1}{3}$ & $\frac{77}{100}$ & $16$ & \ttt{33F0F55669} & \ttt{3CC005A566} \\\hline
$41$ & $1.1273\ldots$ & $\frac{1841}{5043}$ & $\frac{1841}{5043}$ & $\frac{3844}{5043}$ & $32$ & \ttt{00033C66A5A} & \ttt{00F03369955} \\\hline
\end{tabular}
\end{center}
\end{table}%
\setcounter{table}{2}%
\begin{table}[!ht]
\caption{(continued) Lowest Limiting Pursley-Sarwate Criterion among Seed Pairs that Have the Lowest Limiting Autocorrelation Demerit Factor}
\vspace{-0.9mm}
\begin{center}
\begin{tabular}{|c|lccc|c|c|c|}
\hline
\footnotesize{seed}  & \multicolumn{4}{|c|}{limiting values as $n\to\infty$} & \footnotesize{orbit} & \multicolumn{2}{c|}{{\footnotesize seeds}} \\
\footnotesize{length} & \multicolumn{1}{c}{\tiny{$\PSC(f_n,g_n)$}} & {\tiny $\ADF(f_n)$} & {\tiny $\ADF(g_n)$} & {\tiny $\CDF(f_n,g_n)$} & \footnotesize{size} & {\footnotesize $f_0$} & {\footnotesize $g_0$}  \\
\hline\hline
$42$ & $1.1171\ldots$ & $\frac{521}{1323}$ & $\frac{521}{1323}$ & $\frac{319}{441}$ & $32$ & \ttt{01A5024EE35} & \ttt{0978ED8A2B1} \\\hline
$43$ & $1.1474\ldots$ & $\frac{2017}{5547}$ & $\frac{2017}{5547}$ & $\frac{4348}{5547}$ & $16$ & \ttt{19685754CBC} & \ttt{3433FDFA1E6} \\
$43$ & $1.1474\ldots$ & $\frac{2017}{5547}$ & $\frac{2017}{5547}$ & $\frac{4348}{5547}$ & $16$ & \ttt{1BC85754E1C} & \ttt{3693FDFA346} \\\hline
$44$ & $1.1639\ldots$ & $\frac{125}{363}$ & $\frac{125}{363}$ & $\frac{595}{726}$ & $16$ & \ttt{0BE7818814C} & \ttt{67D44D4B285} \\\hline
$45$ & $1.1675\ldots$ & $\frac{2201}{6075}$ & $\frac{2201}{6075}$ & $\frac{4892}{6075}$ & $32$ & \ttt{01A1CE2AEC0D} & \ttt{097C919FF6BC} \\\hline
$46$ & $1.1109\ldots$ & $\frac{617}{1587}$ & $\frac{617}{1587}$ & $\frac{382}{529}$ & $32$ & \ttt{00F34581196A} & \ttt{1EF59BD8C174} \\\hline
$47$ & $1.1475\ldots$ & $\frac{2393}{6627}$ & $\frac{2393}{6627}$ & $\frac{5212}{6627}$ & $16$ & \ttt{0006E529E49C} & \ttt{363960F91AAA} \\
$47$ & $1.1475\ldots$ & $\frac{2393}{6627}$ & $\frac{2393}{6627}$ & $\frac{5212}{6627}$ & $16$ & \ttt{0886CF03E414} & \ttt{3EB94AD31A22} \\\hline
$48$ & $1.1643\ldots$ & $\frac{73}{216}$ & $\frac{73}{216}$ & $\frac{119}{144}$ & $16$ & \ttt{235B4408706E} & \ttt{235B45778F91} \\
$48$ & $1.1643\ldots$ & $\frac{73}{216}$ & $\frac{73}{216}$ & $\frac{119}{144}$ & $16$ & \ttt{3185A800F4D9} & \ttt{3185AABF0B26} \\\hline
$49$ & $1.1690\ldots$ & $\frac{2593}{7203}$ & $\frac{2593}{7203}$ & $\frac{5828}{7203}$ & $32$ & \ttt{007F570D131A6} & \ttt{0721128716E8A} \\
$49$ & $1.1690\ldots$ & $\frac{2593}{7203}$ & $\frac{2593}{7203}$ & $\frac{5828}{7203}$ & $32$ & \ttt{0220DA1E50EEE} & \ttt{0EEB06978B777} \\\hline
$50$ & $1.1093\ldots$ & $\frac{721}{1875}$ & $\frac{721}{1875}$ & $\frac{453}{625}$ & $32$ & \ttt{092A07192BF8E} & \ttt{1F8C92BF8E6D4} \\\hline
$51$ & $1.1560\ldots$ & $\frac{2801}{7803}$ & $\frac{2801}{7803}$ & $\frac{6220}{7803}$ & $16$ & \ttt{003CBF96B9CCE} & \ttt{133641E5434AA} \\
$51$ & $1.1560\ldots$ & $\frac{2801}{7803}$ & $\frac{2801}{7803}$ & $\frac{6220}{7803}$ & $16$ & \ttt{023615B4134EE} & \ttt{113CEBC7E9C8A} \\\hline
$52$ & $1.1863\ldots$ & $\frac{1}{3}$ & $\frac{1}{3}$ & $\frac{865}{1014}$ & $16$ & \ttt{00C3CC8A65695} & \ttt{03C0CFB9969AA} \\
$52$ & $1.1863\ldots$ & $\frac{1}{3}$ & $\frac{1}{3}$ & $\frac{865}{1014}$ & $16$ & \ttt{30F3FC455A566} & \ttt{33F0FF76A9A59} \\\hline
\end{tabular}
\end{center}
\end{table}%
\renewcommand{\arraystretch}{1.0}%

One can compare the results on Table \ref{Mary} (which reports the minimum limiting Pursley-Sarwate Criterion over all pairs of stems of a given length) with the results of Table \ref{Andrew} (which reports the minimum limiting Pursley-Sarwate Criterion only over pairs of stems whose limiting autocorrelation demerit factors equal the minimum value for that length).
In some cases (lengths $13$, $15$, $19$, $20$, $21$, and $23$ through $28$) the limiting Pursley-Sarwate Criterion reported on Table \ref{Mary} is lower.
We were able to go to much greater lengths in Table \ref{Andrew} because the computational burden is greatly reduced when we restrict the calculations of crosscorrelation properties to only those pairs of seeds that produce minimum asymptotic autocorrelation demerit factors.
The lowest asymptotic value for the Pursley-Sarwate Criterion we discovered was $331/300=1.10333\ldots$ for $\ell=40$ on Table \ref{Andrew}.
Our construction can be compared to randomly selected long binary sequences, which typically have Pursley-Sarwate Criterion of about $2$, and to high-performance sequence pairs constructed from finite field characters that have asymptotic Pursley-Sarwate Criterion of $7/6$ (see \cite[\S II.E, \S IV.D]{Katz} and \cite[eq.(6)]{Boothby-Katz}).

Our computations of limiting crosscorrelation demerit factors in Tables \ref{Mary} and \ref{Andrew} use a fast Fourier transform algorithm to speed up the convolutions (Laurent polynomial multiplications) that appear in the formula for asymptotic crosscorrelation demerit factor in Theorem \ref{Vivian}.  Because these calculations are performed using floating point arithmetic, there are small rounding errors.  We checked that the approximate values of the norms and integrals in our formula were always very close to integers: all discrepancies were less than $2\cdot 10^{-11}$.

\section*{Acknowledgments}

This research was done using computing resources provided by the Open Science Grid \cite{OSG-1,OSG-2}, which is supported by the National Science Foundation award 1148698, and the U.S. Department of Energy's Office of Science.
The authors thank Balamurugan Desinghu and Mats Rynge, who helped them set up the calculations on the Open Science Grid.
The authors thank the anonymous reviewers for their comments on the manuscript, many of which have helped improve this paper.

\end{document}